\newcommand \redhighlight[1]{\ifthenelse{\boolean{HIGHCOMM}}{\textcolor{red}{#1}}{#1}}
\newcommand \yhl[1]{\ifthenelse{\boolean{HIGHLIGHT}}{\hl{#1}}{#1}}
\newtheorem{lemma}{Lemma}
\newtheorem{example}{Example}
\newtheorem{problem}{Problem}
\newtheorem{definition}{Definition}
\newtheorem{theorem}{Theorem}
\begin{document}

\title{An Efficient Algorithm for Monitoring Practical TPTL Specifications}
%\author{Adel Dokhanchi \and Bardh Hoxha \and Georgios Fainekos} 
%\institute{School of Computing, Informatics and Decision Systems Engineering\\ Arizona State University\\
%\email{\{adokhanc,bhoxha,fainekos\}@asu.edu} }
%%% END Title, Authors, Affiliation -----------------------------
\author{
	\IEEEauthorblockN{Adel Dokhanchi, Bardh Hoxha, Cumhur Erkan Tuncali, and Georgios Fainekos}
	%\IEEEauthorblockA{School of Pizza\\
	%School of Computing, Informatics and Decision Systems Engineering \\
	Arizona State University, Tempe, AZ, U.S.A.\\
	Email: \{adokhanc,bhoxha,etuncali,fainekos\}@asu.edu
	%\and
	%\IEEEauthorblockN{AuthorN}
	%\IEEEauthorblockA{School of Lasagna\\
	%Arizona State University, Tempe, AZ, U.S.A.\\
	%Email: spinach@asu.edu}
}

\maketitle             
\thispagestyle{empty}
\pagestyle{empty}

\begin{abstract} 
We provide a dynamic programming algorithm for the monitoring of a fragment of Timed Propositional Temporal Logic (TPTL) specifications.
This fragment of TPTL, which is more expressive than Metric Temporal Logic, is characterized by independent time variables which enable the elicitation of complex real-time requirements. 
For this fragment, we provide an efficient polynomial time algorithm for off-line monitoring of finite traces. 
Finally, we provide experimental results on a prototype implementation of our tool in order to demonstrate the feasibility of using our tool in practical applications.
%Experimental results present the runtime of our algorithm for TPTL specifications which can be beneficial for both academia and industry.

%This fragment considers the TPTL fragment with independent time variables which can formalize very complex real-time requirements from academia and industry and it is still more expressive than Metric Temporal Logic (MTL).
%For that fragment, we provide an efficient polynomial time algorithm for off-line monitoring of finite traces.
%In our experimental results, we present the runtime overhead of our algorithm for the specifications that are more expressive than MTL.

%In this paper we provide a dynamic programming algorithm for a subset of TPTL requirements. 
%We showed that most of TPTL examples in the academia can be represented using independent time variables.
%Though we provided an efficient polynomial time algorithm for off-line monitoring of finite traces with respect to these types of TPTL requirements.
%In our experimental results, we provided the runtime overhead of our algorithm for these requirements that are used in the previous works of TPTL requirements. 
\end{abstract}

\IEEEpeerreviewmaketitle
%\todo[inline]{Remove the set of rational numbers}
%\todo[inline]{TPTL semantics for finite trace $\hat{\rho}=(\hat{\dsig},\hat{\tau})$ of U,X devide in two cases}
%\todo[inline]{Repeated sentences must be removed}
%\todo[inline]{Subtree vs. subformula set}
\section{Introduction}
%Recently, most of the systems in industrial domains such as
%medical devices, modern airplanes, automobiles, and smart buildings are 
In Cyber-Physical Systems (CPS), many safety critical components of the system are controlled by embedded computers which interact with the physical environment.
Due to the safety-critical nature of these applications, it is important to verify their correctness during system development stages.
%In order to make it possible to apply formal verification methods in CPS, we need to use mathematical formalization to specify the physical models.
%As a result, {\it hybrid automata} is suggested to uniquely model both continuous system dynamics of physical models, and the discrete components of the system such as control mode and switching components.
However, the verification problem for CPS with respect to safety requirements is undecidable, in general \cite{hs_gf:Alur95algorithmic}.
An alternative to formal verification is semi-formal model-based testing and monitoring of CPS. 
%In the testing method, the test team provides the safety requirement and then interprets the system outputs against the desired outputs that satisfy the requirement. 
%In order to provide the safety requirements, the most convenient way is to use natural language. 
%Engineers especially prefer natural language over formal requirements, because it is the form of knowledge representation which engineers mostly use in design, prototyping and implementation.
%However, natural language is not accurate to provide requirements for critical systems.
%This is because the natural language is ambiguous and it may have more than one interpretation. 
%To have a unique interpretation we need to use formal logics to specify the requirements. 
%Utilizing formal specifications can lead to improved testing and verification of CPS \cite{Fainekos2012}.
We utilize formal logic, in order to formally specify real-time requirements.
% that considers time in the syntax and semantics of the formulation. 

Metric Temporal Logic (MTL) was introduced to provide the formalization of real-time specifications \cite{Koymans1990}.
Since its introduction, MTL and its variants have been used in the verification of real-time systems \cite{OuaknineW08}.
Several tools, such as \staliro \cite{Annapureddy2011} and Breach \cite{Donze2010}, have been developed by the academic community for the purpose of semi-formal verification of MTL specifications. 
%Formal specifications in MTL have been used for testing and semi-formal verification of CPS with tools such as \staliro \cite{Annapureddy2011} and Breach \cite{Donze2010}.
These tools use off-line and on-line monitoring algorithms to check whether the execution trace of a CPS satisfies/falsifies an MTL formula.
In off-line monitoring, the execution trace 
%is finite and it is saved after running the system for a bounded amount of time. 
is finite and generated by running the system for a bounded amount of time. 
Then, the off-line monitor checks whether the execution trace satisfies the specification.
On the other hand, an on-line monitor runs simultaneously with the system.
In this paper, we consider off-line monitoring of TPTL specifications.

%Since the time complexity of off-line monitoring of MTL is linear to the size of the finite system trace, \redhighlight{many algorithms have been implemented in tools such as \staliro \cite{Fainekos2012} and Breach \cite{DonzeAF2013stl}.}
The time complexity of off-line monitoring for MTL is linear to the size of a finite system trace and linear to the size of MTL formula. 
%Several algorithms have been presented such as dynamic programming \cite{Fainekos2012} and sliding window \cite{DonzeAF2013stl} for MTL monitoring of CPS.
Several algorithms using dynamic programming \cite{Fainekos2012} or sliding windows \cite{DonzeAF2013stl} have been proposed for MTL monitoring of CPS.
In this paper, we consider TPTL specifications which are more expressive than MTL specifications \cite{BouyerCM10}.
TPTL is an extension of Linear Temporal Logic (LTL) with freeze quantifiers represented as ``$x.$''.
A freeze quantifier $x.$ assigns to time variable $x$ the ``current'' time stamp when the corresponding subformula $x.\varphi(x)$ is evaluated \cite{AlurH94}.
Then, the time value (stored in $x$) can be evaluated inside time constraints which are linear inequalities over the time variables.
 
Since its introduction, two semantics where considered for TPTL \cite{AlurH94,BouyerCM10}. 
Alur's semantics \cite{AlurH94} allows two time variables in time constraints (for example $x+1\le y+4$).
In contrast, Raskin's semantics allows only one time variable in the time constraint ($x\le4$) and implicitly considers the current time as the second time variable \cite{BouyerCM10,Raskin99}.
Since the latter semantics was first considered by Jean-François Raskin in \cite{Raskin99}, we will refer to it as ``Raskin's TPTL semantics'' in this paper.
Raskin's TPTL semantics was mentioned with alternative terms such as ``Timed LTL" in \cite{KristoffersenPA03}.
In another line of work, in \cite{DeshmukhMP15}, the authors augmented Alur's time constraints with more complex temporal-special predicates to define the closeness property of two different CPS trajectories.
However, the authors in \cite{DeshmukhMP15} did not provide a TPTL monitoring algorithm.
%It has been shown that model checking problem of Timed Automtata with respect to TPTL is undecidable \cite{AlurH94}. 
%Therefore, TPTL can only be considered for testing and monitoring.

Since TPTL subsumes MTL, it is expected that the monitoring problem of TPTL is computationally more complex \cite{FengLQ15}.
It has been proven that monitoring of a finite trace with respect to Alur's TPTL specification is PSPACE-hard \cite{Markey2006MCR}. 
In \cite{Markey2006MCR}, the authors transform a Quantified Boolean Formula (QBF), which is PSPACE-hard, into a TPTL formula with real value time variables.
A similar complexity result (PSPACE-hard) for Raskin's TPTL semantics is obtained for integer time variables in \cite{FengLQ15}.
It is mentioned in \cite{FengLQ15} that in order to obtain a polynomial time algorithm for TPTL monitoring (path checking), we need to fix the number of time variables.
In other words, if the number of time variables is bounded then the finite trace monitoring will be polynomial to the size of the TPTL formula.
However, in \cite{FengLQ15}, the authors did not provide any applicable algorithm for TPTL monitoring and they focused only on the complexity class.

In this work, we move one step further from \cite{FengLQ15}, and allow the number of time variables to be arbitrary, but they must be independent to each other\footnote{In Section \ref{fragments}, Definition \ref{independent}, we introduce independent time variables.}.
For this fragment of TPTL, we provide an efficient TPTL monitoring algorithm which has time complexity quadratic in the length of the finite trace.
In addition, the runtime of the algorithm is proportional to the number of time variables in TPTL.
%In Section $???$, we conjecture that the extension of our algorithm to support full TPTL specification will make the runtime overhead exponential to the number of time variables.

In terms of related work, a rewriting based algorithm for TPTL has been provided in \cite{ChaiS13}.
%The monitoring algorithm that has been provided for TPTL is a rewriting based algorithm \cite{ChaiS13}.
In \cite{ChaiS13}, the authors did not evaluate the time complexity of their proposed algorithm.
The rewriting technique was used for on-line monitoring of TPTL specifications in \cite{HakanssonJL03}.
The authors used the relativization of TPTL formula with respect to the sequence of observed states \cite{HakanssonJL03}, and it was reported that the time complexity is exponential to the size of TPTL formula \cite{HakanssonJL03}. 
To the best of our knowledge, our paper is the first work where an efficient and practical TPTL off-line monitoring algorithm is provided.

\section{Preliminaries}
%\redhighlight{Since we consider testing and monitoring of CPS we must assume a sampled representation of system behavior, where we can have a discrete representation of the execution trace as the input of monitoring algorithm. Timed State Sequences (TSS) is the sampled physical (simulated) behavior of CPS using a digital clock.}
\redhighlight{We assume a sampled representation of system behavior with a discrete trace as the input to the monitoring algorithm. We utilize the notion of Timed State Sequences (TSS) \cite{AlurH94} to represent the sampled behavior of a system using a digital clock.} 
We interpret TPTL formulas over TSS. Assume $AP=\{a,b,\cdots\}$ is a set of atomic propositions, $\Re_+$ is the set of non-negative real numbers, and $\Ne$ denotes non-negative integers.
\begin{definition}[State and Time Sequences \cite{AlurH94}]
	\label{def:t&s}
	A state sequence $\dsig=\dsig_0\dsig_1\dsig_2\cdots$ is an infinite sequence of states $\dsig_i\subseteq AP$, where $i\in \Ne$. A (sampled) time sequence $\tau=\tau_0\tau_1\tau_2\dots$ is an infinite sequence of time stamps $\tau_i\in\Re_+$, where $i\in \Ne$.
\end{definition} 
 We assume that the time sequence $\tau$ is:
 \begin{enumerate}
 \item {\bf Initialized}, which means that the start up time is zero ($\tau_0=0$).
 \item {\bf Monotonic}, which means that $\tau_i\le \tau_{i+1}$ for all $i\in\Ne$.
 \item {\bf Progressive}, which means that for all $t\in\Re_+$ there is some $i\in\Ne$ such that $\tau_i>t$.
\end{enumerate}
\begin{definition}[Timed State Sequence (TSS) \cite{AlurH94}]
	\label{def:tss}
	A timed state sequence $\rho=(\dsig,\tau)$ is a pair consisting of a state sequence $\dsig$ and a time sequence $\tau$ where $\rho_0\rho_1\rho_2\cdots=(\dsig_0,\tau_0)(\dsig_1,\tau_1)(\dsig_2,\tau_2)\cdots$.
\end{definition} 
Given an infinite TSS $\rho$, we consider a finite prefix of $\rho$ as a finite TSS. 
The symbol $\hat{\rho}=(\hat{\dsig},\hat{\tau})$ is used to denote a finite TSS with the size of $|\hat{\rho}|=|\hat{\dsig}|=|\hat{\tau}|$. 
In this paper, we consider the monitoring of finite TSS with the size of $|\hat{\rho}|$ which is equal to the number of simulation/execution samples.

\subsection{TPTL Syntax and Semantics}
To prevent any confusion in the presentation, we consider Raskin's TPTL semantics \cite{Raskin99,BouyerCM10}\footnote{We will explain in Section \ref{fragments} why we chose Raskin's semantics.}.
TPTL is an extension of LTL that enables the formalization of real-time properties by including time variables and a freeze time quantifier \cite{AlurH94}.

\begin{definition}[Syntax for $TPTL$]
	\label{def:tptlsyn}
%Given the finite set of atomic propositions $AP$, a set of time variables $V$. A formula $\varphi\in TPTL$ is formed inductively according to the following grammar:
 The set of  TPTL  formulas $\varphi$ over a finite set of atomic propositions ($AP$) and a finite set of time variables ($V$) is inductively defined according to the following grammar:
\end{definition} 
	$$\varphi\;::=\;\top \; | \; a \; | \;  x\sim r \; | \; \neg\varphi \; | \; \varphi_1 \wedge \varphi_2\; | \; \varphi_1 \vee \varphi_2 \; | \; \bigcirc\varphi \; | \; \varphi_1 U \varphi_2  \; | \;  x.\varphi$$
where $x\in V$, $r\in \Re_+$, $a\in AP$, and $\sim$ $\in\{\le,<,=,>,\ge\}$, and $\top$ is the symbol for ``True''. \\
The time constraints of TPTL are represented in the form of $x\sim r$.
The freeze quantifier $x.$ assigns the current time of the formula's evaluation (at each sampled time $\tau_i$) to the time variable $x$. 
A TPTL formula is $closed$ if every occurrence of a time variable is within the scope of a freeze quantifier \cite{AlurH94}. 
In TPTL specifications, we always deal with closed formulas.

We note that ``False'' is represented as $\bot\equiv\neg\top$ and ``Implication'' is represented as $\varphi_1\rightarrow\varphi_2\equiv\neg\varphi_1\vee\varphi_2$. 
For all formulas $\psi$, $\phi$, $\Diamond\psi\equiv\top U\psi$ (Eventually $\psi$), $\Box\psi\equiv\neg\Diamond\neg\psi$ (Always $\psi$), and $\psi R\phi\equiv\neg(\neg\psi U\neg\phi)$ ($\psi$ Releases $\phi$) are defined in the conventional way. Since we focus on off-line monitoring, we only consider the TPTL semantics for finite traces.
\begin{definition}[Discrete-Time Semantics for $TPTL$]
	\label{def:sem4tptl}
	Let  $\hat{\rho}=(\hat{\dsig},\hat{\tau})$ be a finite TSS and $i\in\Ne$ where $i<|\hat{\rho}|$ is the index of the current sample, $a\in AP$, $\varphi\in TPTL$, and an environment $\varepsilon:V\rightarrow\Re_+$. 
	The satisfaction relation $(\hat{\rho},i,\varepsilon)\models\varphi$ is defined recursively as follows:
\end{definition}
\begin{enumerate}
	\item []$(\hat{\rho},i,\varepsilon)\models\top$
	\item []$(\hat{\rho},i,\varepsilon)\models a$ iff  $a\in \dsig_i$
	\item []$(\hat{\rho},i,\varepsilon)\models\neg\varphi$  iff  $(\hat{\rho},i,\varepsilon)\not\models\varphi$
	\item []$(\hat{\rho},i,\varepsilon)\models\varphi_1\wedge\varphi_2$  iff  $(\hat{\rho},i,\varepsilon)\models\varphi_1$ and $(\hat{\rho},i,\varepsilon)\models\varphi_2$	
	\item []$(\hat{\rho},i,\varepsilon)\models\varphi_1\vee\varphi_2$  iff  $(\hat{\rho},i,\varepsilon)\models\varphi_1$ or $(\hat{\rho},i,\varepsilon)\models\varphi_2$
	\item []$(\hat{\rho},i,\varepsilon)\models\bigcirc\varphi$  iff $(\hat{\rho},i+1,\varepsilon)\models\varphi$ and $i<(|\hat{\rho}|-1)$
	\item []$(\hat{\rho},i,\varepsilon)\models\varphi_1U\varphi_2$  iff $\exists j, i\le j<|\hat{\rho}|$ s.t. $(\hat{\rho},j,\varepsilon)\models\varphi_2$ and $\forall k, i\le k<j$ it holds that $(\hat{\rho},k,\varepsilon)\models\varphi_1$%, \\otherwise $(\hat{\rho},i,\varepsilon)\models\bot$
	\item []$(\hat{\rho},i,\varepsilon)\models x\sim r$ iff $(\tau_i-\varepsilon(x))\sim r$ i.e. \\(current time stamp)  $-~\varepsilon(x)\sim r$
	\item []$(\hat{\rho},i,\varepsilon)\models x.\varphi$ iff $(\hat{\rho},i,\varepsilon[x:=\tau_i])\models\varphi$
\end{enumerate}
%It should be noted that since we consider finite traces semantics of $\hat{\rho}$:
%$$\forall i\ge|\hat{\rho}|:(\hat{\rho},i,\varepsilon)\models\bot$$
%According to finite trace semantics for the last sample of the trace $\hat{\rho}$, the NEXT operator $\bigcirc\varphi\equiv\bot$ and the UNTIL operator $\varphi_1U\varphi_2\equiv\varphi_2$ are interpreted as follows:
%\begin{enumerate}
%	\item []$(\hat{\rho},|\hat{\rho}|-1,\varepsilon)\models\bigcirc\varphi$  iff  $(\hat{\rho},|\hat{\rho}|-1,\varepsilon)\models\bot$
%	\item []$(\hat{\rho},|\hat{\rho}|-1,\varepsilon)\models\varphi_1U\varphi_2$  iff $(\hat{\rho},|\hat{\rho}|-1,\varepsilon)\models\varphi_2$ 
%\end{enumerate}

The semantics of TPTL are defined over an evaluation function $\varepsilon:V\rightarrow\Re_+$ which is an environment for the time variables. Assume $x=r$ where $x\in V$, and $r\in\Re_+$, then we have $\varepsilon(x)=r$. 
Given a variable $x\in V$ and a real number $q\in \Re_+$, we denote the environment with $\varepsilon'=\varepsilon[x:=q]$  which is equivalent to the environment $\varepsilon$ on all time variables in $V$ except variable $x$.
The assignment operation $x:=q$ changes the environment $\varepsilon$ to the new environment $\varepsilon'$.
Formally, $\varepsilon'(y)=\varepsilon(y)$ for all $y\ne x$ and $\varepsilon'(x)=q$. 
We write {\bf 0} for the ({\bf zero}) environment such that {\bf 0}$(x)=0$ for all $x\in V$.
We say that $\hat{\rho}$ satisfies $\varphi$ ($\hat{\rho}\models\varphi$) iff $(\hat{\rho},0,${\bf 0}$)\models\varphi$.
A variable ``$x$'' that is bounded by a corresponding {\it freeze quantifier} ``$x.$'' saves the local temporal context $\tau_i$ (now) in ``$x$''. 
Assume $\varphi(x)$ is a formula with a free variable $x$. 
The TSS $\hat{\rho}$ satisfies $x.\varphi(x)$ if it satisfies $\varphi(\tau_0=0)$, where $\varphi(0)$ is obtained from $\varphi(x)$ by replacing all the free occurrences of the variable $x$ with constant 0 \cite{AlurH94}.
%Syntax and semantics of $TPTL_{R}$ and $TPTL_{A}$ are identical except for the time constraints. 

%The different between $TPTL_{R}$ and $TPTL_{A}$ is for the interpretation of clock and time constraints. 
%Since the clock (time) constraints of $TPTL_{R}$ ($TPTL_{A}$) are the main difference between them, when we consider $TPTL_{R}$, we mention clock variables and clock constrains, and also when we consider TPTL we mention time variables and time constrains.

%In $TPTL_{A}$ formulas, time constraints can be presented with two time variables in the form of $\pi_1\sim\pi_2$. 
%In contrast, in $TPTL_{R}$ formulas, clock constraints can be represented with only one clock variable in the form of $x\sim r$. 
%Time (clock) constraints in both semantics are then treated as predicates that are changed depending the temporal context and the freeze values. 
%In both semantics, constraints are changed through different time samples. 
%The difference between $TPTL_{A}$ and ClockTL is that in ClockTL formula, clock constraint ($x\sim r$) have one clock variable $x$, and the other clock variable is implicitly obtained from the local temporal context ($\tau_i$) in the evaluation $\tau_i-\varepsilon(x)\sim r$. 
\subsection{TPTL Fragments}
\label{fragments}
%Since we focus on Raskin's semantics in this paper, to prevent any confusion, from now on when we mention TPTL we mean $TPTL_R$.
In this section, we introduce a TPTL fragment for which we have developed a monitoring algorithm.
%In this section, we introduce the syntactical TPTL fragment for which our monitoring algorithm considers as input formula. 
This restriction is crucial for obtaining the polynomial runtime of the algorithm.
\begin{definition}[Independent Time Variable]
\label{independent}
A time variable $x$ is independent if it is 
in the scope of 
%bound to
only one freeze quantifier $x.$ and no other time variable is 
in the scope of 
%bound to
the corresponding freeze quantifier ($x.$).
%Given a TPTL formula $\varphi$, a variable $x$ occurring in $\varphi$ is independent if there is exactly one occurrence of the free quantifier $.x$ in $\varphi$ and the argument of the quantifier has only x as free variable.\\
%All frozen formulas are closed.
\end{definition}

For example in $x.(\psi(x)\vee\Diamond y.\varphi(x,y))$, neither $x$ nor $y$ is independent. 
This is because $x$ is within the scope of the freeze time quantifiers $x.$ in $x.(\psi(x)\vee\Diamond y.\varphi(x,y))$ and $y.$ in $y.\varphi(x,y)$.
Similarly, $y$ is not the only time variable that is within the scope of $y.$ in $y.\varphi(x,y)$. 
However, both $x$ and $y$ are independent in $x.(\psi(x)\vee\Diamond y.\varphi(y))$.

Now we explain why we focus on Raskin's semantics in our monitoring algorithm.
In Raskin's semantics, each time constraint contains a single time variable (see Definition \ref{def:tptlsyn}).
However, in Alur's semantics each time constraint contains two time variables \cite{AlurH94}.
In Alur's semantics, time variables in the same constraint are dependent to each other.
As a result, in order to benefit from independent time variables, we should consider Raskin's semantics.
\begin{definition}[Encapsulated TPTL formula]
	 Encapsulated TPTL formulas are TPTL formulas where all the time variables are independent.
\end{definition}

\redhighlight{In other words, an encapsulated formula is a closed formula in which every sub-formula has at most one free time variable.}
\begin{definition}[Frozen Subformula]
\label{FRZ}
Given an encapsulated TPTL formula~$\Phi$, a frozen subformula $\phi$ of $\Phi$ is a subformula which is bounded by a freeze quantifier corresponding to (an independent) time variable.
\end{definition}

\redhighlight{In encapsulated formulas, all the closed subformulas are frozen. 
For example the formula $x.(\psi(x)\vee\Diamond y.\varphi(x,y))$ is not an ``encapsulated'' formula because $y.\varphi(x,y)$ is not frozen since $x,y$ are not independent.}
%, while the following formula is encapsulated $x.(\psi(x)\vee\Diamond y.\varphi(y))$.
%In encapsulated formulas for all time constraints, the time variables in TPTL formulas are bounded to their closest freeze quantifier because time variables are independent.
Here are two TPTL formulas $\varphi_1$,$\varphi_2$ that look similar but only one of them is encapsulated.
\begin{itemize}
\item $\varphi_1=\Box x.\Diamond(a\wedge x\le10\wedge y.\Box(\mathbf{y\le2}\wedge y\ge1\wedge b))$
%\\$=\Box x.\Diamond(a\wedge x\le10\wedge y.\psi_1(y))$\\
%where $\psi_1(y)=\Box(\mathbf{y\le2}\wedge y\ge1\wedge b)$
\item $\varphi_2=\Box x.\Diamond(a\wedge x\le10\wedge y.\Box(\mathbf{x\le2}\wedge y\ge1\wedge b))$
%\\$=\Box x.\Diamond(a\wedge x\le10\wedge y.\psi_2(x,y))$\\
%where $\psi_2(x,y)=\Box(\mathbf{x\le2}\wedge y\ge1\wedge b)$
\end{itemize}
In the above, $\varphi_1$ is encapsulated, but $\varphi_2$ is not encapsulated since $y.\Box(x\le2\wedge y\ge1\wedge b)$ where $x\le2$ is inside the scope of ``$y.$''.
\begin{lemma}
	Any MTL formula can be represented by an ``encapsulated'' TPTL formula.
\end{lemma}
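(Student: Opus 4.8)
The plan is to exhibit an explicit, compositional translation $\mathcal{T}$ from MTL into TPTL and to verify by structural induction that for every MTL formula $\psi$ the formula $\mathcal{T}(\psi)$ is encapsulated and is equivalent to $\psi$ over every finite TSS. I work with the pointwise discrete-time semantics of MTL over TSS, in which the only timing information is carried by the interval subscripts $I$ of the timed operators $\bigcirc_I$ and $U_I$. The decisive structural feature of MTL that I would exploit is that an MTL formula contains \emph{no} time variables and no freeze quantifiers at all; all timing is packaged inside intervals, so each timed operator can be handled locally with a single fresh frozen variable.

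First I would define $\mathcal{T}$ by recursion on the structure of the MTL formula. The atomic and Boolean cases are homomorphic: $\mathcal{T}(\top)=\top$, $\mathcal{T}(a)=a$, $\mathcal{T}(\neg\psi)=\neg\mathcal{T}(\psi)$, $\mathcal{T}(\psi_1\wedge\psi_2)=\mathcal{T}(\psi_1)\wedge\mathcal{T}(\psi_2)$, and similarly for $\vee$. For the timed until I introduce a \emph{fresh} time variable $x$ (one occurring nowhere in the translations of the operands) and set
\[
\mathcal{T}(\psi_1\,U_I\,\psi_2)\;=\;x.\bigl(\,\mathcal{T}(\psi_1)\,U\,(\chi_I(x)\wedge\mathcal{T}(\psi_2))\,\bigr),
\]
where $\chi_I(x)$ is the conjunction of single-variable constraints that encodes $\tau_j-\varepsilon(x)\in I$: for $I=[a,b]$ I take $\chi_I(x)=(x\ge a)\wedge(x\le b)$, for open or half-open endpoints I replace the relevant $\ge,\le$ by $>,<$, and for an unbounded interval $[a,\infty)$ I keep only $x\ge a$. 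The timed next $\bigcirc_I$ is translated identically with $\bigcirc$ in place of $U$, and the derived operators $\Diamond_I,\Box_I$ follow from their definitions.

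The correctness argument rests on the invariant, carried through the induction, that $\mathcal{T}(\psi)$ is always a \emph{closed} TPTL formula. In the until case the freeze quantifier $x.$ stores $\varepsilon(x)=\tau_i$ at the current index $i$, so by the TPTL semantics of the time constraint the formula $\chi_I(x)$ evaluated at a later index $j$ asserts exactly $\tau_j-\tau_i\in I$. Combining this with the induction hypotheses for $\mathcal{T}(\psi_1)$ and $\mathcal{T}(\psi_2)$, the TPTL until holds iff there is $j\ge i$ with $\tau_j-\tau_i\in I$, with $\psi_2$ true at $j$ and $\psi_1$ true on $[i,j)$ --- precisely the MTL meaning of $\psi_1\,U_I\,\psi_2$. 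Because $\mathcal{T}(\psi)$ is closed, the environment is irrelevant, so the equivalence holds for every $\varepsilon$, and in particular at $(\hat{\rho},0,\mathbf{0})$, giving $\hat{\rho}\models_{\mathrm{MTL}}\psi \iff \hat{\rho}\models\mathcal{T}(\psi)$.

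For encapsulation I would enlarge the induction hypothesis to also assert that every subformula of $\mathcal{T}(\psi)$ has at most one free time variable, which is exactly the characterization of ``encapsulated'' noted after the definition of encapsulated formulas. Here the freshness of $x$ does the work: in $x.(\mathcal{T}(\psi_1)\,U\,(\chi_I(x)\wedge\mathcal{T}(\psi_2)))$ the operands $\mathcal{T}(\psi_1)$ and $\mathcal{T}(\psi_2)$ are closed by hypothesis, so the only variable free anywhere inside the scope of $x.$ is $x$ itself; hence every subformula created at this step has at most one free variable and the whole formula is again closed. The main obstacle is this timed-until case: I must check that the single-variable Raskin constraint $\tau_j-\varepsilon(x)$ faithfully reproduces interval membership for all combinations of open/closed and bounded/unbounded endpoints, and confirm that the fresh-variable discipline yields an \emph{encapsulated} formula rather than merely a closed one.
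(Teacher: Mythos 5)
Your proposal is correct and follows essentially the same route as the paper: the paper's proof likewise introduces one fresh, independent frozen variable per timed operator and recursively rewrites $\psi\,U_{[l,u]}\,\phi$ as $x.(\psi\, U\, (x\ge l\wedge x\le u\wedge \phi))$, observing that the result is encapsulated. Your version merely fills in the details the paper leaves implicit (general interval shapes, timed next, and the structural induction establishing equivalence and the one-free-variable invariant).
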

\begin{proof}
	Each time interval of an MTL temporal operator can be represented with a unique time variable which is independent of the rest of time variables. The syntactic modification works as follows: every MTL formula of the form $\varphi=\psi U_{[l,u]}\phi$ can be recursively represented as the following TPTL formula $\varphi= x.(\psi U(x\ge l\wedge x\le u \wedge \phi))$.
	The resulting TPTL formula is encapsulated. 
\end{proof}
\begin{lemma}
	MTL is less expressive than ``encapsulated'' TPTL formulas.
\end{lemma}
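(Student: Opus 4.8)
The plan is to prove a strict inclusion, and the preceding lemma already supplies one direction for free: it shows that every MTL formula admits an equivalent encapsulated TPTL formula, so encapsulated TPTL is at least as expressive as MTL. To upgrade this to \emph{strict} inequality, it suffices to exhibit a single encapsulated TPTL formula that is equivalent to \emph{no} MTL formula over timed state sequences. So the whole task reduces to producing one witness and proving its inexpressibility in MTL.

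For the witness I would take the encapsulated formula
\[
\Phi \;=\; \Diamond\, x.\big(a \wedge \Diamond(b \wedge \Diamond(c \wedge x \le 1))\big).
\]
It uses a single time variable $x$, which is frozen and independent, so $\Phi$ is indeed encapsulated. Intuitively $\Phi$ asserts: eventually reach a position carrying $a$ and freeze its time as $x$; after it reach a $b$ at an \emph{unconstrained} time; and after that a $c$ whose timestamp is within one time unit of the frozen $a$-position. The crucial features are that the freeze happens away from the origin and that an unconstrained intermediate eventuality ($b$) separates the freeze point from the position where $x\le 1$ is tested. To prove $\Phi$ is not MTL-definable I would argue by contradiction: assume an MTL formula $\psi$ with $\hat\rho\models\psi \iff \hat\rho\models\Phi$ for all finite TSS, let $N$ be the temporal nesting depth of $\psi$ and $K$ the finite set of constants occurring in its intervals, and build two families $\rho_n,\sigma_n$ of timed state sequences with (i) $\rho_n\models\Phi$ and $\sigma_n\not\models\Phi$ for all $n$, yet (ii) for $n$ large relative to $N$ and $K$, no MTL formula of depth $\le N$ with constants in $K$ separates $\rho_n$ from $\sigma_n$. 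Concretely I would place $a$ near time $0$, a dense chain of $n$ copies of $b$ filling $(0,1)$, and a single $c$: in $\rho_n$ the $c$ lies within one unit of $a$ so some intervening $b$ witnesses $\Phi$, while in $\sigma_n$ the timing is perturbed so every candidate $b$ pushes $c$ just outside the one-unit window. Combining (i) and (ii) yields $\rho_n\models\psi \iff \sigma_n\models\psi$ but $\rho_n\models\Phi \not\iff \sigma_n\models\Phi$, contradicting $\psi\equiv\Phi$.

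The main obstacle is precisely step (ii): showing the two model families are MTL-indistinguishable. The inclusion direction and the definition of $\rho_n,\sigma_n$ are routine; the content is that no bounded-depth MTL formula can tell them apart, because MTL measures time only relative to the current point of evaluation through bounded intervals and therefore cannot retain the exact offset of the $a$-position across the intervening $\Diamond b$. I would formalize this with an Ehrenfeucht--Fra{\"i}ss{\'e} game for MTL (in the style of the MTL/TPTL separation results), maintaining by induction on the structure of $\psi$ a correspondence between positions of $\rho_n$ and $\sigma_n$ that preserves atomic propositions and preserves membership in every interval of $K$ when distances are measured from corresponding reference points; the density of the $b$-chain provides enough slack to keep this correspondence alive through each of the $N$ nesting levels. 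As a lighter-weight alternative, if a self-contained game is deemed too heavy for this paper, one may instead invoke the known strict separation of TPTL over MTL directly and merely verify that a witnessing formula can be chosen encapsulated, which $\Phi$ above is.
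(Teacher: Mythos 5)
Your overall skeleton---the previous lemma gives the inclusion, and a single encapsulated witness that no MTL formula can express gives strictness---is the same as the paper's, and your ``lighter-weight alternative'' at the end is in fact \emph{exactly} the paper's proof: the paper exhibits the known witness $\psi=x.\Diamond(a\wedge x\le1\wedge\Box(x\le1\rightarrow\neg b))$, observes that it is evidently encapsulated, and cites \cite{BouyerCM10} for the fact that it is not expressible in MTL. Had you committed to that branch (with a witness actually covered by the cited theorem), your proof would be correct and essentially identical to the paper's.

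The gap is in the route you actually propose as primary. The MTL-inexpressibility of your witness $\Phi=\Diamond\,x.(a\wedge\Diamond(b\wedge\Diamond(c\wedge x\le1)))$ is not a technical verification to be sketched; it \emph{is} the entire content of the lemma, and your step (ii) leaves it as a plan: the families $\rho_n,\sigma_n$ are never pinned down (by how much is $\sigma_n$ perturbed, relative to what granularity determined by $N$ and $K$?), the Ehrenfeucht--Fra\"{\i}ss\'e game for MTL is not defined, and the invariant you intend to maintain is precisely the theorem of Bouyer, Chevalier and Markey, which required a dedicated and delicate proof. Worse, the intuition you rely on---``MTL measures time only relative to the current point and therefore cannot retain the offset of the $a$-position across the intervening $\Diamond b$''---is \emph{false} in the continuous (interval-based) semantics: formulas of exactly your shape (three ordered events with an end-to-end timing bound) are known to be MTL-expressible there, and the separation via this pattern holds only in the pointwise semantics used in this paper. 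So a correct argument must exploit the sampled, pointwise nature of the models in an essential way, which your sketch never does; as written, the key step cannot be completed by the generic slack argument you describe. If you fall back on citation, also note two details the paper handles: cite the witness the cited paper actually proves inexpressible (the $\psi$ above, or the pointwise witness $x.\Diamond(b\wedge\Diamond(c\wedge x\le2))$ of \cite{BouyerCM10}) rather than your variant, or else you owe an argument that $\Phi$ falls under their theorem; and state explicitly that the witness is encapsulated (a single, independent frozen variable), since that is what makes the known separation applicable to this lemma.
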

\begin{proof}
%It has been proven in \cite{BouyerCM10} that $\psi$ is more expressive than any MTL formula:
It is proven in \cite{BouyerCM10} that the following TPTL formula, which is evidently encapsulated, cannot be expressed by any MTL formula \cite{BouyerCM10}: $\psi=x.\Diamond(a\wedge x\le1\wedge\Box(x\le1\rightarrow\neg b))$
%It should be mentioned that $\psi$ from  \cite{BouyerCM10} is an encapsulated TPTL.
\end{proof}
In the rest of the paper, we focus on the following problem:
\begin{problem}
	Given a finite TSS $\hat{\rho}$ and an ``encapsulated'' TPTL formula $\varphi$, check whether $\hat{\rho}$ satisfies $\varphi$ ($\hat{\rho}\models\varphi$).
\end{problem}
\section{Monitoring Encapsulated TPTL Formulas}
%In this section, we provide a high-level overview of the proposed off-line monitoring algorithm for encapsulated TPTL formulas where time variables are independent. 
\subsection{TPTL Representation}
\label{datastructure}
%Before explaining the detail of our method we must provide the data-structure that we consider for TPTL specification.
In the following, we will describe the data structure that will be utilized to capture the solution for the TPTL monitoring problem.
We store each TPTL formula in a binary tree data structure. 
Consider the following example:
\begin{example}
	\label{mtvExmp}
	Assume $AP=\{a,b\}$ and let\\
	$\phi=\Box x.\Diamond( (x\le1\rightarrow a) \wedge y.\Diamond(y\le1\rightarrow\neg b))$
	\\$\phi\equiv \Box x.\Diamond((x\le1\rightarrow a)\wedge y.\psi_1(y))\equiv \Box x.\psi_2(x)$ \\
	where we use $\psi_1$ and $\psi_2$ to simplify the presentation:\\ $\psi_1(y)\equiv\Diamond(y\le1\rightarrow\neg b)$ \\ $\psi_2(x)\equiv\Diamond((x\le1\rightarrow a)\wedge y.\psi_1(y))$
\end{example}
In this example, we have two independent time variables $x$ and $y$. 
%Since $x$ and $y$ are independent time variables, we can compute the Boolean values of $y.\psi_1(y)$ and then $x.\psi_2(x)$.
The binary tree of Example \ref{mtvExmp} is depicted in Fig. \ref{fig:Tree}.
There, the thirteen nodes correspond to thirteen subformulas.
\begin{figure}[!t]
	\vspace{-5pt} % in text
	\vspace{-10pt} % on top of page
	\centering 
	\includegraphics[width=7cm]{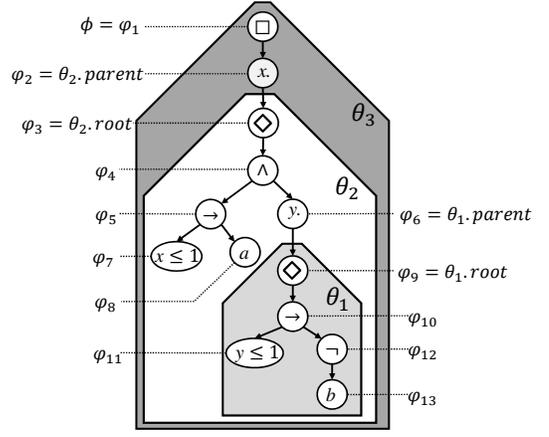}
	\vspace{-10pt}
	\caption{Binary tree of Example \ref{mtvExmp} ($\phi$) with three subtrees corresponding to sets of subformulas $\theta_1,\theta_2,\theta_3$.}
	\label{fig:Tree}
	\vspace{-15pt} % in text
\end{figure}

In Fig. \ref{fig:Tree}, each subformula $\varphi_i$ has a node corresponding to the highest operator for $\varphi_i$.
In addition, for each subformula $\varphi_i$ we assign an index $i$.
The order of indexes is generated according to a topological sort where parents have lower index values than children.
Therefore, the original subformula $\phi$ obtains the index 1 because it is the first visited.
To evaluate each node's $\top/\bot$ value we need to evaluate its children's $\top/\bot$ value before, this is because of the TPTL recursive semantics (see Definition \ref{def:sem4tptl}).
If we evaluate the nodes in the decreasing order of indexes, we would be able to evaluate all the children before their parents.

Now, we must partition the formula tree into subtrees rooted by the freeze time operators.
Since in Example \ref{mtvExmp}, we have two independent time variables, we created 2+1 subtrees (two for time variables and one for the original formula).
Each subtree contains a set of subformulas. 
These subformulas and their corresponding subtrees $\theta_1,\theta_2,\theta_3$ are shown in Fig. \ref{fig:Tree} with different colors: 

The set $\theta_1$ contains subformulas rooted at node $\varphi_9$ represented in the {\bf light-gray} subtree. The set $\theta_1$ contains the subformulas of $y.\psi_1(y)$ as follows $\theta_1=\{\Diamond(y\le1\rightarrow\neg b),y\le1\rightarrow\neg b,y\le1,\neg b,b\}=\{\varphi_9,\varphi_{10},\varphi_{11},\varphi_{12},\varphi_{13}\}$.

The set $\theta_2$ contains subformulas rooted at node $\varphi_3$ represented in the {\bf white} subtree. The set $\theta_2$ contains the subformulas of $x.\psi_2(x)$ as follows $\theta_2=\{\Diamond((x\le1\rightarrow a)\wedge y.\psi_1(y)),(x\le1\rightarrow a)\wedge y.\psi_1(y),(x\le1\rightarrow a),y.\psi_1(y),x\le1,a\}=\{\varphi_3,\varphi_4,\varphi_5,\varphi_6,\varphi_7,\varphi_8\}$.

The set $\theta_3$ contains subformulas rooted at node $\varphi_1$ represented in {\bf dark-gray} subtree. The set $\theta_3$ contains the subformulas of $\theta_3=\{\Box x.\psi_2(x)\mbox{ , }x.\psi_2(x)\}=\{\varphi_1,\varphi_2\}$.
%This is because the subformulas of each encapsulated part cannot be evaluated separately since time variables are independent.
%We allow multiple occurrence of atomic propositions $p\in AP$ and subformulas which will make the implementation easier without sacrificing the complexity.
%For Example \ref{mtvExmp}, each set corresponds to a subtree of \ref{fig:Tree} and it has the following subformulas:
%\begin{enumerate}
%	\item [] $\theta_3=\{\Box x.\psi_2(x)\mbox{ , }x.\psi_2(x)\}$
%	\item [] $\theta_2=\{\Diamond((x\le1\rightarrow a)\wedge y.\psi_1(y))\mbox{ , }(x\le1\rightarrow a)\wedge y.\psi_1(y)\mbox{ , }(x\le1\rightarrow a),y.\psi_1(y),x\le1\}$
%	\item [] $\theta_1=\{\Diamond(y\le1\rightarrow\neg b)\mbox{ , }y\le1\rightarrow\neg b\mbox{ , }y\le1\mbox{ , }\neg b\}$
%\end{enumerate}
%For freeze time variables, we assume the index of the time variable is for lowest position in the binary tree.

Each of the subtrees $\theta_1$ and $\theta_2$ have distinguished fields referencing to (the index of) $parent$ and $root$ nodes which are represented in Fig. \ref{fig:Tree} as follows:\\
1) $\theta_1.parent=6$ and $\theta_1.root=9$.\\
2) $\theta_2.parent=2$ and $\theta_2.root=3$.

Note that $\theta_1$ is subformula of $\theta_2$, and $\theta_2$ is subformula of $\theta_3$. This ordering is very important for our algorithm. We created these subtrees because each frozen subformula can be separately evaluated.
Therefore, we can guarantee the polynomial runtime.
The method will be described in details in Section \ref{MA}.
%\todo[inline]{move to section \ref{TPTL2LTL}} 
\subsection{Monitoring Table}
\label{MT}
We assume that the sampled system output is mapped (projected) on a {\it finite} TSS $\hat{\rho}$; therefore, we can evaluate the system output using our off-line monitor.
If the specification does not have a freeze time operator, 
%it is an LTL formula and LTL monitoring algorithm will be applied according to \cite{Rosu01}.
then the formula is an LTL formula for which the existing monitoring algorithms will be utilized \cite{Rosu01}.
If the specification has a freeze time operator, we first ``instantiate"  \redhighlight{the time variable with the time label of the current sample before formula evaluation. Then, we compute $\bot/\top$ values of the corresponding time constraints.}
%s at the current evaluation of time and 
When time constraints are evaluated, they will be resolved to $\bot/\top$, and then, the frozen subformula ($x.\varphi(x)$) is 
converted into an LTL formula. Hence, we can apply dynamic programming method \cite{Rosu01} to compute the Boolean value of the frozen subformula.
%It should be noted that when the value of time constraints are resolved to Boolean, the rest of evaluation is identical to LTL. 

For each frozen subformula ($x.\varphi(x)$) at each time instance $\tau_i$, we must first precompute the Boolean ($\bot/\top$) value of the corresponding time constraints to transform this frozen subformula into an LTL. 
A two-dimensional matrix $M_{|\phi|\times |\hat{\rho}|}$ with height (number of rows) $|\phi|$ , and width (number of columns) $|\hat{\rho}|$ is created. Here $|\phi|$ denotes the number of subformulas in $\phi$, and $|\hat{\rho}|$ is the number of samples. \redhighlight{Note that row indexing starts from 1 ($\phi\equiv\varphi_1$) up to $|\phi|$ and column indexing starts from 0 ($\rho_0$) up to $|\hat{\rho}|-1$.}
% or the size of TSS. Now consider the following formula:

The monitoring table of Example \ref{mtvExmp} is presented in Table \ref{tab:mtvExmp}.
%Since $x$ and $y$ are independent time variables, we can compute the Boolean values of $y.\psi_1(y)$ and then $x.\psi_2(x)$.
At the beginning, the system outputs corresponding to atomic propositions ($AP=\{a,b\}$) are stored in the rows which belong to the propositions $a$ (row $\varphi_8$) and $b$ (row $\varphi_{13}$) in Table \ref{tab:mtvExmp}.
In Fig. \ref{fig:Tree}, the subformula $\psi_2(x)$ is depicted inside the {\bf white} subtree and $\psi_1(y)$ is depicted inside the {\bf light-gray} subtree.
In the following, we explain the other rows of Table \ref{tab:mtvExmp} and provide a high level overview of the monitoring of $\phi$:

{\bf 1st Run)} We first instantiate time variable $y$ at each sample $i$ with the corresponding timed instance $\tau_i$ to evaluate the Boolean values for the corresponding time constraint $y\le1$ (row $\varphi_{11}$). 
The instantiation transforms $y.\psi_1(y)$ into an LTL formula. Then we compute the Boolean values of $\psi_1(\tau_0)$, $\psi_1(\tau_1)$, $\psi_1(\tau_2)$, $\dots$, $\psi_1(\tau_6)$ from left to right. 
Now the Boolean value of $y.\psi_1(y)$ for each time stamp $\tau_i$ is available for the higher level subtree of the Table \ref{tab:mtvExmp}. 
Therefore, the Boolean values should be copied from row $\varphi_{9}$ to row $\varphi_{6}$.

{\bf 2nd Run)} Given the $\bot/\top$ values of $y.\psi_1(y)$, we can instantiate $x$ at each time stamp $\tau_i$ and modify formula $x.\psi_2(x)$ into an LTL formula. Then we compute the Boolean values of $\psi_2(\tau_0)$, $\psi_2(\tau_1)$, $\psi_2(\tau_2)$, $\dots$, $\psi_2(\tau_6)$ from left to right. Now the Boolean values of $x.\psi_2(x)$ are available for each time stamp $\tau_i$ for the higher subtree.
As a result, the $\bot/\top$ values should be copied from row $\varphi_{3}$ to row $\varphi_{2}$.

{\bf 3rd Run)} The Boolean value of $\Box x.\psi_2(x)$ is computed given the Boolean values of $\psi_2(\tau_i)$ according to the semantics of Always ($\Box$) operator: $$\phi\equiv\sideset{}{_{i=0}^{6}}\bigwedge\psi_2(\tau_i)$$
%It should be mentioned that we need to instantiate time variables from left to right in the 1st and the 2nd Runs of Table \ref{tab:mtvExmp} because we can reuse the columns of $M_{|\varphi|\times |\hat{\rho}|}$ without using extra memory.
  \begin{table}[t]
  	\centering
  	\caption{The Monitoring Table of formula $\phi$ of Example \ref{mtvExmp} (Fig. \ref{fig:Tree}) }
  	\fontsize{7}{10}
  	\begin{tabular}{|c|c|c|c|c|c|c|c| }
  		\hline
  		$\varphi_i$(OP) & $\tau_0$& $\tau_1$  & $\tau_2$ &$\tau_3$&$\tau_4$&$\tau_5$&$\tau_6$ \\ \hline
$\varphi_1$($\Box$) \cellcolor{gray!70}& $\{\bot/\top\}$ &   &  &  &  &  &  \\ \hline
 $\varphi_2$($x.$)\cellcolor{gray!70}& $\psi_2(0)$&$\psi_2(\tau_1)$&$\psi_2(\tau_2)$&$\psi_2(\tau_3)$&$\psi_2(\tau_4)$ &$\psi_2(\tau_5)$&$\psi_2(\tau_6)$ \\ \hline \hline
$\varphi_3$($\Diamond$)&$\psi_2(0)$&$\psi_2(\tau_1)$&$\psi_2(\tau_2)$&$\psi_2(\tau_3)$&$\psi_2(\tau_4)$ &$\psi_2(\tau_5)$&$\psi_2(\tau_6)$  \\ \hline
$\varphi_4$($\wedge$) &  & & & && & \\ \hline
  			 
$\varphi_5$($\rightarrow$)&   & & & && &  \\ \hline
$\varphi_6$($y.$) & $\psi_1(0)$\cellcolor{gray!40}& $\psi_1(\tau_1)$\cellcolor{gray!40}&$\psi_1(\tau_2)$\cellcolor{gray!40}&$\psi_1(\tau_3)$\cellcolor{gray!40}&$\psi_1(\tau_4)$\cellcolor{gray!40}& $\psi_1(\tau_5)$ \cellcolor{gray!40} & $\psi_1(\tau_6)$ \cellcolor{gray!40} \\ \hline
$\varphi_7$($x\le1$)&  & & & && & \\ \hline
$\varphi_8$($a$) & & & & && &\\ \hline \hline
$\varphi_9$($\Diamond$)\cellcolor{gray!40}& $\psi_1(0)$\cellcolor{gray!40}&$\psi_1(\tau_1)$\cellcolor{gray!40} &$\psi_1(\tau_2)$ \cellcolor{gray!40}&$\psi_1(\tau_3)$ \cellcolor{gray!40}& $\psi_1(\tau_4)$\cellcolor{gray!40}&$\psi_1(\tau_5)$ \cellcolor{gray!40}&$\psi_1(\tau_6)$\cellcolor{gray!40}\\ \hline
$\varphi_{10}$($\rightarrow$)\cellcolor{gray!40}&& & & && &\\ \hline
$\varphi_{11}$($y\le1$)\cellcolor{gray!40}&& & & && &\\ \hline
$\varphi_{12}$($\neg$)\cellcolor{gray!40}& & & & && & \\ \hline
$\varphi_{13}$($b$)\cellcolor{gray!40}& & & &&& &\\ \hline
  		
  	\end{tabular}
  	\label{tab:mtvExmp}%
  	\vspace{-15pt}
  \end{table}%

\section{TPTL Monitoring Algorithm}
\label{MA}
%In the previous section, we presented the monitoring table and we used Example \ref{mtvExmp} to provide an overview of our TPTL representation. 
The algorithms has the main following steps.
\begin{enumerate}
\item For each time variable (frozen subformula) and for each time stamp.
\item Resolve the time constraints into $\bot/\top$ values (This step converts the corresponding frozen subformula into an LTL formula).
\item Compute $\bot/\top$ value of the resulting LTL formula using the dynamic programming algorithm.
\item These $\bot/\top$ values of frozen subformula are used to evaluate the higher level subformulas.
\end{enumerate}
In the following, a detailed description and pseudo code of the proposed algorithm for TPTL monitoring will be explained. 
\subsection{TPTL to LTL Transformation}
\label{TPTL2LTL}

The pseudo code of the monitoring algorithm is provided in Algorithm \ref{alg:off-line} and its main loop has $|V|+1$ iterations where $|V|$ is the number of freeze time variables. Algorithm \ref{alg:off-line} calls Algorithm \ref{alg:DP-LTL} for computing the Boolean value of LTL subformulas. 
The first line of Algorithm \ref{alg:off-line} sets the monitoring table entries of the corresponding atomic propositions, namely the Boolean value of each $p\in AP$ is extracted from the finite state sequence $\hat{\dsig}$.
\redhighlight{In addition, Line 1 sets the monitoring table entries for constant boolean values $\bot/\top$.}
For each time variable $v_k$ (in Line 2), we need to compute the $\bot/\top$ value of the subtree $\theta_k$. The order of $k$ is in such away that the inner most subtree ($\theta_1$) is evaluated first then $\theta_2$, and finally,  $\theta_3$ (See Fig \ref{fig:Tree} for Example \ref{mtvExmp}). This order is crucial for the correctness of the algorithm, because higher level subformulas consider the lower level frozen subformulas as $\bot/\top$. 

%In Example \ref{mtvExmp}, the order of $k$ is as follows: compute the Boolean value of subtrees of $\theta_1$, then $\theta_2$, and, finally, $\theta_3$ from the lowest level to the highest level. 

To transform the frozen formula into LTL for each sample time $t$ between 0 to $|\hat{\rho}|-1$ (see Line 3), we must first instantiate the time variable $v_k$ to the corresponding time stamp $\tau_t$, then compute the Boolean value of the corresponding time constraint $v_k\sim r$.
The instantiation evaluates the whole constraint row into $\bot/\top$ in Lines 4-13 of Algorithm \ref{alg:off-line}.
The environment is updated based on the time stamp $\tau_t$ and the formula translated into an LTL formula.
Now we use a dynamic programming algorithm based on \cite{Rosu01} to compute the $\bot/\top$ value of the frozen subformula in Lines 14-18. \redhighlight{In Line 15 of Algorithm \ref{alg:off-line}, $\theta_k.max$ ($\theta_k.min$) is the maximum (minimum) index of subformulas in the subtree $\theta_k$.}
%\todo[inline]{needs better explanation?} 
In Example \ref{mtvExmp}:\\
1) $\theta_1.min=9$ and $\theta_1.max=13$\\
2) $\theta_2.min=3$ and $\theta_2.max=8$

When the Boolean value of the frozen subformula of $v_k.\psi(v_k)$ ($\theta_k.root$) at time stamp $v_k=\tau_t$ is resolved, this Boolean value is copied to the parent of $\theta_k$ ($\theta_k.parent$) to be used by higher level subformulas (see Line 19 of Algorithm \ref{alg:off-line}).
The loop of Line 3-20 continues for the other time stamps ($\tau_1\dots\tau_{|\hat{\rho}|-1}$) and computes the $\bot/\top$ value of the frozen subformula for each instantiation of $v_k$ to the time stamps $\tau_1\dots\tau_{|\hat{\rho}|-1}$ in this order.
Now we resolved the $\bot/\top$ value of the frozen subformula of $v_k.\psi(v_k)$ for all time stamps.
We continue this process for other time variables (Lines 2-21).

When the Boolean values of the frozen subformulas are resolved for each time variable $v_1\dots v_k\dots v_{|V|}$ in this order, we have an LTL formula for the highest level subformula where it corresponds to subtree $\theta_{|V|+1}$. To compute the $\bot/\top$ value of the highest set of subformulas we run Lines 22-26 of Algorithm \ref{alg:off-line}.
Note that Lines 22-26 are almost identical to Lines 14-18 because the highest set of subformulas is in LTL.
The final value that corresponds to the monitoring trace is stored in table entry $M[1,0]$ and it will be returned to the user. \redhighlight{The table entry $M[1,0]$ contains the Boolean value of the TPTL specification ($\varphi_1$) at sampled index 0.}
\subsection{LTL Monitoring}
%\todo[inline]{Does it Explain well?} 
Now we explain how to compute the Boolean values of the LTL subtree.
Algorithm \ref{alg:DP-LTL} is based on \cite{Rosu01}, and follows Definition \ref{def:sem4tptl}.
Algorithm \ref{alg:off-line} calls Algorithm \ref{alg:DP-LTL} at each sample $u$.
Algorithm \ref{alg:DP-LTL} has the following 5 cases to compute the Boolean values of the corresponding LTL operators:
\begin{enumerate}
	\item Lines 1-2 for the NOT operation ($\neg$).
	\item Lines 3-4 for the AND operation ($\wedge$).
	\item Lines 5-6 for the OR operation ($\vee$).
	\item Lines 7-12 for the NEXT operation ($\bigcirc$).
	\item Lines 13-19 for the UNTIL operation ($U$).
\end{enumerate}
%In this algorithm, since we consider the isolated formulas, each freeze quantifier contains subformulas. $M[v_k,u]\leftarrow\tau_t$
\begin{algorithm}[t]
%	\fontsize{7.5}{7.5}
	\caption{TPTL Monitor}
	{\bf Input}: $\varphi$, $\hat{\rho}=(\dsig_0,\tau_0)(\dsig_1,\tau_1)\cdots(\dsig_T,\tau_T)$; 
	{\bf Global variables:} $M_{|\varphi|\times |\hat{\rho}|}$; 
	{\bf Output}: $M[1,0]$.
	\label{alg:off-line}

		\hspace {10pt}{\bf   procedure  }{\sc TPTLMonitor}($\varphi, \hat{\rho}$)
		\begin{algorithmic}[1]
			%\Procedure{{\sc Monitor}}{$\varphi, \Oc, s(i), \genMet$}
			\State Initialize all rows in $M_{|\varphi|\times |\hat{\rho}|}$ corresponding to predicates $\varphi_j\equiv p\in AP$ with $\top/\bot$ value according to $\hat{\dsig}$.
			\For{$k\gets 1\mbox{ to }|V|$}
			\For{$t\gets 0\mbox{ to }|\hat{\rho}|-1$}
			\For{$u\gets t\mbox{ to }|\hat{\rho}|-1$}
			\For{each $\varphi_j\equiv v_k\sim r\in\theta_k$ where\\\hspace{60pt} $j$ is the index of $v_k\sim r$ in $M$}
			\If{$(\tau_u-\tau_t)\sim r$}
			\State $M[j,u]\leftarrow\top$
			\Else
			\State $M[j,u]\leftarrow\bot$
			\EndIf
			\EndFor
			\EndFor
			\For{$u\gets |\hat{\rho}|-1\mbox{ down to }t$}
			\For{$j\gets \theta_k.max\mbox{ down to }\theta_k.min$}
			\State $M[j,u]\leftarrow ComputeLTL(\varphi_j,u,M_{|\varphi|\times |\hat{\rho}|})$
			\EndFor
			\EndFor	
			\State 	$M[\theta_k.parent,t]\leftarrow M[\theta_k.root,t]$ 
			\EndFor 
			\EndFor
			\For{$u\gets |\hat{\rho}|-1\mbox{ down to }0$}
			\For{$j\gets \theta_{|V|+1}.max\mbox{ down to }\theta_{|V|+1}.min$}
			\State $M[j,u]\leftarrow ComputeLTL(\varphi_j,u,M_{|\varphi|\times |\hat{\rho}|})$
			\EndFor
			\EndFor		
						
			\State\Return$M[1,0]$ // Return the value of the first cell/row in $M_{|\varphi|\times |\hat{\rho}|}$ table
			%\State $robustness\leftarrow T_{k,j-1}$
			%\EndProcedure
		\end{algorithmic}
		\hspace {5pt}{\bf   end procedure}
\end{algorithm}

\begin{algorithm}[t]
	%	\fontsize{7.5}{7.5}
	\caption{LTL Monitor}
	{\bf Input}: $\varphi_j,u,M_{|\varphi|\times |\hat{\rho}|}$; 
	{\bf Output}: $M[j,u]$.
	\label{alg:DP-LTL}

	\hspace {10pt}{\bf   procedure  }{\sc ComputeLTL}($\varphi_j,u,M_{|\varphi|\times |\hat{\rho}|}$)
	\begin{algorithmic}[1]
		%\Procedure{{\sc Monitor}}{$\varphi, \Oc, s(i), \genMet$}
		\If{$\varphi_j\equiv\neg\varphi_m$}
		\State \Return $\neg M[m,u]$
		\ElsIf {$\varphi_j\equiv\varphi_m\wedge\varphi_n$}
		\State \Return $M[m,u]\wedge M[n,u]$
		\ElsIf {$\varphi_j\equiv\varphi_m\vee\varphi_n$}
		\State \Return $M[m,u]\vee M[n,u]$
		\ElsIf {$\varphi_j\equiv\bigcirc\varphi_m$}
		\If{$u=|\hat{\rho}|-1$}
		\State \Return $\bot$
		\Else
		\State \Return $M[m,u+1]$
		\EndIf
		\ElsIf {$\varphi_j\equiv\varphi_mU\varphi_n$}
		\If{$u=|\hat{\rho}|-1$}
		\State \Return $M[n,u]$
		\Else
		\State \Return $M[n,u]\vee(M[m,u]\wedge M[j,u+1])$
		\EndIf
		\EndIf
		%\State $robustness\leftarrow T_{k,j-1}$
		%\EndProcedure
	\end{algorithmic}
	\hspace {5pt}{\bf   end procedure}
\end{algorithm}
Note that Algorithm \ref{alg:DP-LTL} (ComputeLTL) is $O(1)$ complexity.
Since we can evaluate each frozen subformula ($x.\varphi(x)$) separately because of independent time variables, the time complexity of the algorithm is proportional to the number of time variables and the size of the subformula.
On the other hand, for each time sample we instantiate each time variable to convert the TPTL subformula into an LTL subformula in $O(|\hat{\rho}|)$ then run the LTL monitoring algorithm in $O(|\hat{\rho}|)$. 
As a result, the upper bound on the time complexity of Algorithm \ref{alg:off-line} is $O(|V|\times|\varphi|\times |\hat{\rho}|^2)$, where $|V|$ is the number of time variables, $|\varphi|$ is the number of subformulas, and $|\hat{\rho}|$ is the number of TSS samples.
Both algorithms' correctness proofs are provided in Section \ref{App}.

\subsection{Running example}  
%The execution of Algorithm \ref{alg:off-line} on Example \ref{mtvExmp} will be provided in this section.
In this section, we utilize our monitoring algorithm to compute the solution for Example \ref{mtvExmp}.
First step of the algorithm is the $\top/\bot$ computation of the frozen subformula $y.\psi_1(y)$ which corresponds to subtree $\theta_1$ and is represented in {\bf light-gray} rows of Tables \ref{tab:mtvExmp} and \ref{tab:mtvExmp2}. 
In Table \ref{tab:mtvExmp2}, when the time value of $y$ is instantiated to 0, then the value of the time constraint $y\le1$ will be resolved for all the samples of $i$ between 0 to $6$ according to the following inequality $\tau_i-0\le1$.
Now $\psi_1(0)$ is transformed into LTL and $\psi_1(0)$ is evaluated, i.e., $\psi_1(0)\equiv\top$ (see row $\varphi_{9}$ column $\tau_0$). 
Then, the time value of $y$ is instantiated to $\tau_1=0.3$ and the value of the time constraint $y\le1$ will be resolved for all the samples of $i$ between 1 to $6$ according to the following inequality $\tau_i-0.3\le1$.
Similarly, $\psi_1(0.3)$ is transformed into LTL and $\psi_1(0.3)$ can be computed, i.e., $\psi_1(0.3)\equiv\top$ (see row $\varphi_{9}$ column $\tau_1$).
We continue the computation of $\psi_1(\tau_i)$ with the following instantiation $\tau_2=0.7,\dots,\tau_6=1.9$ similar to $\tau_0$.
Now $\bot/\top$ values of the frozen subformula $y.\psi_1(y)$ for each time stamp $\tau_i$ are available in row $\varphi_{9}$ of Table \ref{tab:mtvExmp2}.

The Boolean values of subtree $\theta_1$ should be available for higher level subformulas.
Therefore, the row $\varphi_{9}$ will be copied to row $\varphi_{6}$ (in Table \ref{tab:mtvExmp2} both rows have the same color).
Now we can continue the second run of the algorithm.
%The computation of the second frozen subformula of $\phi$ is provided in Table \ref{tab:mtvExmp3} and corresponds to the 2nd Run in Table \ref{tab:mtvExmp}.  
The $\top/\bot$ computation of the frozen subformula $x.\psi_2(x)$ which corresponds to subtree $\theta_2$ is represented in {\bf white} rows of Table \ref{tab:mtvExmp} and \ref{tab:mtvExmp2}. 
In Table \ref{tab:mtvExmp2}, the time value of $x$ is instantiated to 0, then the value of $\psi_2(0)$ is computed, i.e., $\psi_2(0)\equiv\top$ (see row $\varphi_{3}$ column $\tau_0$). 
Now, the time value of $x$ is instantiated to $\tau_1=0.3$ and the value of $\psi_2(0.3)$ is computed $\psi_2(0.3)\equiv\top$ (see row $\varphi_{3}$ column $\tau_1$). 
We continue the computation of $\psi_2(\tau_i)$ similarly with $\tau_2=0.7\dots\tau_6=1.9$. 
Now the $\bot/\top$ values of the frozen subformula $x.\psi_2(x)$ for each time stamp $\tau_i$ are available in row $\varphi_{3}$ of Table \ref{tab:mtvExmp2}.
Since the Boolean values of subtree $\theta_2$ should be available for higher level subformulas, the row $\varphi_{3}$ is copied to row $\varphi_{2}$.
Finally, we compute $\phi=\Box x.\psi_2(x)$ using Lines 22-26 of Algorithm \ref{alg:off-line} which corresponds to following: $\phi=\sideset{}{_{i=0}^{6}}\bigwedge\psi_2(\tau_i)\equiv\bot$

  \begin{table*}[t]
  	\centering
  	\caption{Computing the Boolean values for $\phi=\Box x.\psi_2(x)$. Boolean values correspond to the final snapshot of Monitoring Table.}
  	\fontsize{7.5}{10}
  	\begin{tabular}{|c|c|c|c|c|c|c|c|c| }
  		\hline
  		$\varphi_i$	&	subformula &$\tau_0=0$&$\tau_1=0.3$&$\tau_2=0.7$&$\tau_3=1.0$&$\tau_4=1.1$&$\tau_5=1.5$& $\tau_6=1.9$ \\ \hline
  		$\varphi_1$\cellcolor{gray!70}&$\phi=\Box x.\psi_2(x)$\cellcolor{gray!70}& $\bot$ & $\bot$ &$\bot$ & $\bot$ & $\bot$ & $\bot$ & $\bot$ \\ \hline
  		$\varphi_2$ \cellcolor{gray!70}&$x.\psi_2(x)\equiv x.\Diamond((x\le1\rightarrow a)\wedge y.\psi_1(y))$\cellcolor{gray!70}& $\psi_2(0)\equiv\top$&$\psi_2(\tau_1)\equiv\top$&$\psi_2(\tau_2)\equiv\top$&$\psi_2(\tau_3)\equiv\top$&$\psi_2(\tau_4)\equiv\bot$& $\psi_2(\tau_5)\equiv\bot$&$\psi_2(\tau_6)\equiv\bot$ \\  \hline \hline
  		$\varphi_3$&$\Diamond((x\le1\rightarrow a)\wedge y.\psi_1(y))$&$\top$&$\top$&$\top$&$\top$&$\bot$&$\bot$&$\bot$\\ \hline
  		$\varphi_4$&$(x\le1\rightarrow a)\wedge y.\psi_1(y)$& $\bot$ & $\bot$ &  $\top$ & $\top$ & $\bot$ & $\bot$ & $\bot$ \\ \hline
  		$\varphi_5$&$x\le1\rightarrow a$	& $\bot$ & $\bot$ &  $\top$ & $\top$ & $\top$ & $\bot$ & $\bot$ \\ \hline
  		$\varphi_6$	& $y.\psi_1(y)\equiv y.\Diamond(y\le1\rightarrow\neg b)$  & $\psi_1(0)\equiv\top$\cellcolor{gray!40} &  $\psi_1(\tau_1)\equiv\top$ \cellcolor{gray!40}  & \ $\psi_1(\tau_2)\equiv\top$  \cellcolor{gray!40} &  $\psi_1(\tau_3)\equiv\top$ \cellcolor{gray!40}  &  $\psi_1(\tau_4)\equiv\bot$ \cellcolor{gray!40}  & $\psi_1(\tau_5)\equiv\bot$  \cellcolor{gray!40} & $\psi_1(\tau_6)\equiv\bot$ \cellcolor{gray!40} \\  \hline
  		$\varphi_7$&$x\le1$& $\top$ & $\top$ &  $\top$ & $\top$ & $\top$ & $\top$ & $\top$\\ \hline  		
  		$\varphi_8$	&$a$  & $\bot$ & $\bot$  & $\top$  & $\top$  &  $\top$  &  $\bot$  &  $\bot$ \\\hline \hline
  		$\varphi_9$\cellcolor{gray!40}	&$\Diamond(y\le1\rightarrow\neg b)$\cellcolor{gray!40}	 &$\top$\cellcolor{gray!40}&$\top$\cellcolor{gray!40}&$\top$\cellcolor{gray!40}&$\top$\cellcolor{gray!40}&$\bot$\cellcolor{gray!40}&$\bot$\cellcolor{gray!40}&$\bot$\cellcolor{gray!40} \\  \hline
  		$\varphi_{10}$\cellcolor{gray!40}	&$y\le1\rightarrow\neg b$\cellcolor{gray!40} & $\top$ & $\top$ & $\bot$ & $\top$ &  $\bot$  & $\bot$  &  $\bot$ \\ \hline
  		$\varphi_{11}$\cellcolor{gray!40}	&$y\le1$\cellcolor{gray!40}	& $\top$ & $\top$ &  $\top$   &  $\top$  & $\top$    & $\top$  & $\top$ \\ \hline
  		$\varphi_{12}$\cellcolor{gray!40}	&$\neg b$\cellcolor{gray!40}	& $\top$ & $\top$  & $\bot$  & $\top$  &  $\bot$ & $\bot$  & $\bot$ \\ \hline
%  		6	&	$y.$ (Freeze) & $0$ &  $0.3$  &  $0.7$   &  $1.0$  & $1.1$   & $1.5$  & $1.9$   \\ \hline
  		$\varphi_{13}$\cellcolor{gray!40}	&	$b$ \cellcolor{gray!40} &  $\bot$ &  $\bot$  &  $\top$ &  $\bot$  & $\top$  & $\top$  & $\top$ \\ \hline
  		
  	\end{tabular}
  	\label{tab:mtvExmp2}%
  	\vspace{-12pt}
  \end{table*}%

%  \begin{table*}[t]
%  	\centering
%  	\caption{Computing the Boolean values for $x.\psi_2(x)$}
%  	\fontsize{7.5}{10}
%  	\begin{tabular}{c|c|c|c|c|c|c|c|c| }
%  		\hline
%  		Line	&	subformula & $\tau_0=0$& $\tau_1=0.3$  & $\tau_2=0.7$  &  $\tau_3=1.0$  & $\tau_4=1.1$  &  $\tau_5=1.5$ & $\tau_6=1.9$ \\ \hline
%  		1	& $ x.\psi_2(x)$ & $\psi_2(0)$&  $\psi_2(\tau_1)$  &  $\psi_2(\tau_2)$  & $\psi_2(\tau_3)$  &  $\psi_2(\tau_4)$  &  $\psi_2(\tau_5)$  & $\psi_2(\tau_6)$ \\  \hline
%  		2	&$\Diamond((x\le1\rightarrow a)\wedge y.\psi_1(y))$		& $\top$ & $\top$ &$\top$ & $\top$ & $\bot$ & $\bot$ & $\bot$ \\ \hline
%  		3	&$(x\le1\rightarrow a)\wedge y.\psi_1(y)$		& $\bot$ & $\bot$ &  $\top$ & $\top$ & $\bot$ & $\bot$ & $\bot$ \\ \hline
%  		3	&$x\le1\rightarrow a$		& $\bot$ & $\bot$ &  $\top$ & $\top$ & $\top$ & $\bot$ & $\bot$ \\ \hline
%  		4	&$x\le1$  ($\tau_i-x\le1$)	& $\top$ & $\top$ &  $\top$ & $\top$ & $\top$ & $\top$ & $\top$\\ \hline
%  		5	&	$x.$ (Freeze) & $0$ &  $0.3$  &  $0.7$   &  $1.0$  & $1.1$   & $1.5$  & $1.9$   \\ \hline
%  		6	&	$y.\psi_1(y)$ & $\psi_1(0)\equiv\top$& $\psi_1(\tau_1)\equiv\top$  & $\psi_1(\tau_2)\equiv\top$  & $\psi_1(\tau_3)\equiv\top$  & $\psi_1(\tau_4)\equiv\bot$  & $\psi_1(\tau_5)\equiv\bot$  & $\psi_1(\tau_6)\equiv\bot$ \\ \hline
%  		7	&	$b$ (System output) &  $\bot$ &  $\bot$  &  $\top$ &  $\bot$  & $\top$  & $\top$  & $\top$ \\ \hline
%  		8	&	$a$ (System output) & $\bot$ & $\bot$  & $\top$  & $\top$  &  $\top$  &  $\bot$  &  $\bot$ \\ \hline
  		
%  	\end{tabular}
%  	\label{tab:mtvExmp3}%
% 	\vspace{-12pt}
%  \end{table*}%

\section{Experiments}
  
An implementation of our TPTL monitoring algorithm is provided in the \staliro testing framework \cite{hoxhatowards}.
\staliro is a Matlab toolbox that uses stochastic techniques to 
%find system inputs (trajectories) for Simulink models which falsify the CPS requirements presented in MTL.
find initial states and inputs to Simulink models which result in trajectories that falsify MTL formulas.
With our TPTL off-line monitoring algorithm, \staliro can evaluate specifications that are more expressive than MTL.
\subsection{Runtime Analysis}
We measured the runtime of our TPTL monitoring algorithm using the \staliro toolbox.
The system under test was the Automatic Transmission (AT) model provided by Mathworks as a Simulink demo \cite{AT}.
We introduced a few modifications to the model to make it compatible with the \staliro framework, which are explained in \cite{HoxhaAF14arch}. 
AT has two inputs of Throttle and Brake.
The outputs contain two real-valued traces: the rotational speed of the engine $\omega$ and the speed of the vehicle 
$v$. 
In addition, the outputs contain one discrete-valued trace $gear$ with four possible values. 

To provide TPTL specifications, we defined four atomic propositions corresponding to the following predicates:\\
{\bf 1)} $a_1\equiv(\omega\ge4500)$: ``rotational  speed of the engine $\ge$ 4500''\\
{\bf 2)} $a_2\equiv(\omega\le1500)$: ``rotational speed of the engine $\le$ 1500''\\
{\bf 3)} $a_3\equiv(v\ge40)$: ``speed of the vehicle $\ge$ 40''\\
{\bf 4)} $a_4\equiv(v\le120)$: ``speed of the vehicle $\le$ 120''\\
Note that these predicates are chosen to be non-trivial and have meaning in the CPS context.
The TPTL formulas are generated based on typical safety reactive response specifications.
We generated these TPTL formula patterns to check the runtime with respect to: 1) Size of system trace 2) Number of temporal operators 3) Number of time variables.
 
We created 18 TPTL formulas that cannot be expressed in MTL. 
All the specifications have the reactive response pattern: $\Box (a_1 \rightarrow \psi)$
where $\psi$ is categorized in two groups:
\begin{enumerate}
	\item EA group ($\psi_{EA}$): contains Eventually/Always specifications with 2, 4 and 8 temporal operators.
	\item UR group ($\psi_{UR}$): contains Until/Release specifications with 2, 4 and 8 temporal operators.
\end{enumerate}
We first chose a $\psi$ specification in LTL from Table \ref{tab:tptl_specs} column (LTL template).
In Table \ref{tab:tptl_specs}, column (\#) represents the number of temporal operators for each LTL template.
Then, we added time variables to create a TPTL specification.
The last column in Table \ref{tab:tptl_specs} represents the number of TPTL formulas that we created by adding time constraints on $\psi$.  
The time variables that we add to $\psi$ correspond to individual temporal operators.
In this case, for $\psi_{EA2}$ we create two TPTL formulas with one and two time variables respectively given as $\phi_1$ and $\phi_2$:
\begin{enumerate}
	\item [EA] $\phi_1=\Box (a_1 \rightarrow x.\Diamond(a_2\wedge\Box (a_3\vee a_4\wedge C_x )))$
	\item [EA] $\phi_2=\Box (a_1 \rightarrow x.\Diamond(a_2\wedge C_x \wedge y.\Box (a_3\vee a_4\wedge C_y)))$
\end{enumerate}
where $C_x$ and $C_y$ are the corresponding time constraints for $x$ and $y$.
Similarly for $\psi_{UR2}$ we created two TPTL formulas with one and two time variables respectively given as $\phi_1$ and $\phi_2$:
\begin{enumerate}
	\item [UR] $\phi_1=\Box (a_1 \rightarrow x.(a_2 U ( a_3 R (a_4\wedge C_x) )))$
	\item [UR] $\phi_2=\Box (a_1 \rightarrow x.(a_2 U a_4\wedge C_x \wedge y.( a_3 R (a_4\wedge C_y)))$
\end{enumerate}
We used a similar method to generate $\phi_3$ with one time variable, $\phi_4$ with two time variables, and $\phi_5$ with four time variables based on $\psi_{EA4}$ and $\psi_{UR4}$ with the total number of six TPTL formulas.
Finally, we create eight TPTL formulas based on $\psi_{EA8}$ and $\psi_{UR8}$. 
These formulas are $\phi_6$, $\phi_7$, $\phi_8$, $\phi_9$ and they are represented in Table \ref{tab:overheadRes}.
Our experiments were conducted on a 64-bit Intel Xeon CPU (2.5GHz) with 64-GB RAM and Windows Server 2012.
We used Matlab 2015a and Microsoft Visual C++ 2013 Professional to compile our algorithms' code (in C) using the Matlab mex compiler.

The runtime is provided in Table \ref{tab:overheadRes}.
Each row considers two TPTL formulas in EA or UR configuration. For example, the first column $\phi_1$ represents $\Box (a_1 \rightarrow x.\Diamond(a_2\wedge\Box (a_3\vee a_4\wedge C_x )))$ and $\Box (a_1 \rightarrow x.(a_2 U ( a_3 R (a_4\wedge C_x) )))$ in EA and UR configurations, respectively.
In Table \ref{tab:overheadRes} the second column ($\#$) represents the number of temporal operators in the corresponding frozen subformula, namely, the number of of temporal operators in $\psi_{EA\#}$ or $\psi_{UR\#}$.
The third column ($|V|$) in Table \ref{tab:overheadRes} represents the number of time variables in $\psi_{EA\#}$ or $\psi_{UR\#}$. 

We tested our algorithm with the execution traces of the length 1000, 2000, and 10000.
%a length of execution traces equal to 1000, 2000, and 10000.
For each TPTL formula, we tested our algorithm 100 times where the AT's throttle input is provided by random signal generator (without brake).
We reported the mean value (in {\bf Bold}) and variance of the algorithm's runtime in Table \ref{tab:overheadRes}.
It can be seen that when the length of the trace doubles from $|\hat{\rho}|$=1,000 to $|\hat{\rho}|$=2,000 , the runtime  quadruples (see {\bf Mean} values in Table \ref{tab:overheadRes}).
Similarly, when the length of trace increases ten times from $|\hat{\rho}|$=1,000 to $|\hat{\rho}|$=10,000 the runtime  increased 100 times (see {\bf Mean} values in Table \ref{tab:overheadRes}).
Now, consider the mean values of $\phi_1$ and $\phi_2$. The number of time variables in $\phi_1$ is one and in $\phi_2$ is two.
It can be seen that mean values of  $\phi_2$ are twice as those of $\phi_1$.
Similarly, comparing $\phi_3$ and $\phi_4$ and $\phi_5$ shows that the runtime is proportional to the number of time variables. 
Finally, comparing rows $\phi_1$ and $\phi_3$ and $\phi_6$ shows that the runtime relates to the number of temporal operators.
\redhighlight{The experimental results indicate that the runtime behaves as expected, considering that our algorithm is in $O(|V|\times|\varphi|\times |\hat{\rho}|^2)$.}
\begin{table}[t]
	\centering
	\caption{Specifications of $\psi$ before adding time variables.}{
	\begin{tabular}{| c | l | c | c |}
	\hline
	LTL	& \# & LTL template & TPTLs \\ \hline 
	$\psi_{EA2}$ & 2 & $\Diamond(a_2\wedge\Box (a_3\vee a_4)$ & 2\\
	\hline 
	$\psi_{EA4}$ & 4 &  $\Diamond ( a_2 \wedge \Box (  a_3 \vee  a_4 \wedge \psi_{EA2})$ & 3 \\ \hline 
	$\psi_{EA8}$ & 8 & $\Diamond ( a_2 \wedge \Box (  a_3 \vee  a_4 \wedge \Diamond ( a_2 \wedge \Box (  a_3 \vee  a_4 \wedge \psi_{EA4}))))$ & 4 \\  \hline 
	$\psi_{UR2}$ & 2 & $ a_2 U ( a_3 R a_4 )$ & 2 \\  \hline 
	$\psi_{UR4}$ & 4 & $ a_2 U ( a_3 R ( a_4 \wedge \psi_{UR2} ))$ & 3 \\  \hline 
	$\psi_{UR8}$ & 8 & $  a_2 U ( a_3 R ( a_4 \wedge ( a_2 U ( a_3 R ( a_4 \wedge \psi_{UR4} )))))$ & 4 \\  \hline 

	\end{tabular}}
      \label{tab:tptl_specs}
		    \vspace{-15pt}
\end{table}

\begin{table*}[t]
	\centering
	\caption{The runtime of Monitoring Algorithm for 18 TPTL formulas. All the values are in seconds.  }
	%\fontsize{7}{7}
	\begin{tabular}{|c|c|c|c|c|c|c||c|c|c|c||c|c|c|c|}
		\cline{4-15}
		\multicolumn{3}{l}{} & \multicolumn{4}{|c||}{$|\hat{\rho}|$=1,000} & \multicolumn{4}{c||}{$|\hat{\rho}|$=2,000} & \multicolumn{4}{c|}{$|\hat{\rho}|$=10,000} \\
		\cline{4-15}
		\multicolumn{3}{l}{} & \multicolumn{2}{|c|}{EA ($\psi_{EA\#}$)}     &\multicolumn{2}{c||}{UR ($\psi_{UR\#}$)}     &\multicolumn{2}{c|}{EA ($\psi_{EA\#}$)}      & \multicolumn{2}{c||}{UR ($\psi_{UR\#}$)}     & \multicolumn{2}{c|}{EA ($\psi_{EA\#}$)}     & \multicolumn{2}{c|}{UR ($\psi_{UR\#}$)}  \\ \hline
		$\phi$ & \# &  $|V|$     & {\bf Mean}  & Var. & {\bf Mean}  & Var. & {\bf Mean}  & Var. & {\bf Mean}  & Var. & {\bf Mean}  & Var. & {\bf Mean}  & Var.  \\ \hline\hline           
		\multicolumn{1}{|l|}{$\phi_1$}   &2 &  1    & {\bf 0.077}  & 0.0002 & {\bf 0.064}  & 0.000  &  {\bf 0.326}  & 0.001  &  {\bf 0.250} & 0.0013  &  {\bf 8.512} & 0.066  &  {\bf 6.427} & 0.068 \\\hline\hline
		\multicolumn{1}{|l|}{$\phi_2$}   &2 &  2    &   {\bf 0.151} & 0.0005 &  {\bf 0.137} & 0.0003 &  {\bf 0.5887} & 0.0018 &  {\bf 0.551}  & 0.002  &  {\bf 14.31}  & 0.191 &  {\bf 13.67} & 0.175 \\ \hline\hline
		\multicolumn{1}{|l|}{$\phi_3$}   &4 &  1   &  {\bf 0.142} &  0.0003  &  {\bf 0.097}  & 0.0001 &  {\bf 0.5885} & 0.002 &  {\bf 0.382}  & 0.002  &  {\bf 15.33} &  0.232  &  {\bf 10.46} & 0.154 \\ \hline\hline
		\multicolumn{1}{|l|}{$\phi_4$}   &4 &  2   &  {\bf 0.205} & 0.0003  &  {\bf 0.15}  & 0.0002 &  {\bf 0.871} & 0.0032 &  {\bf 0.604}  & 0.002  & {\bf 22.9} &  0.344  & {\bf 16.35} & 0.24\\\hline\hline
		\multicolumn{1}{|l|}{$\phi_5$}   &4 &  4    & {\bf 0.417} & 0.0012  & {\bf 0.38}  & 0.0004 & {\bf 1.721} & 0.0058 & {\bf 1.558}  & 0.007  & {\bf 46.25} & 7.08   & {\bf 41.2} & 1.077 \\\hline\hline
		\multicolumn{1}{|l|}{$\phi_6$}   & 8& 1    & {\bf 0.227} & 0.0001  & {\bf 0.154}  & 0.0002 & {\bf 0.948} & 0.005 & {\bf 0.552} & 0.0046 & {\bf 30.27} &  9.708  & {\bf 17.01} & 2.184 \\	\hline\hline	
		\multicolumn{1}{|l|}{$\phi_7$}    &8 &  2   & {\bf 0.367} & 0.025  & {\bf 0.235}  & 0.0011 & {\bf 1.474} & 0.0078 & {\bf 1.023}  & 0.0137  & {\bf 41.59} &  2.17  & {\bf 26.95} & 2.204 \\\hline\hline
		\multicolumn{1}{|l|}{$\phi_8$}    & 8&  4   & {\bf 0.533} & 0.0042  & {\bf 0.437}  & 0.0013 & {\bf 2.26} & 0.024 & {\bf 1.751}  & 0.0115  & {\bf 66.13} & 34.36   & {\bf 48.95} & 8.857 \\\hline\hline
		\multicolumn{1}{|l|}{$\phi_9$}   &8 &   8   & {\bf 1.145} & 0.025  & {\bf 1.093} & 0.0066 & {\bf 4.9} & 0.0391 & {\bf 4.346} & 0.1413 & {\bf 137} & 220   & {\bf 124.6} & 184 \\
		\hline
	\end{tabular}%
	\label{tab:overheadRes}%
	\vspace{-10pt}
\end{table*}%
\vspace{-10pt}

\subsection{Case Study}
In this section, we consider CPS requirements which are impossible to formalize in MTL \cite{BouyerCM10}, but we formalize them in TPTL, very easily.
The ultimate goal is to run the testing algorithm on these requirements.
% to be able to run monitoring algorithm for those specification.
Our TPTL monitoring algorithm is provided as add-on to the \staliro testing framework.  
\staliro searches for counterexamples to MTL properties through global minimization of a robustness metric \cite{FainekosP09tcs}.
The robustness of an MTL formula $\varphi$ is a value that measures how far is the trace from the satisfaction/falsification of $\varphi$. 
This measure is an extension of Boolean values ($\top/\bot$) for representing satisfaction or falsification. 
A positive robustness value means that the trace satisfies the property and a negative value means that the property is not satisfied.
The stochastic search then returns the simulation trace with the smallest robustness value that was found.

To falsify safety requirements in TPTL which are more expressive than MTL, we should use our proposed TPTL monitor that can handle those specifications. 
Now let us consider the Automatic Transmission (AT) system. 
It contains the discrete output $gear$ signal with four possible values 
($gear=1$, ..., $gear=4$) which indicate the current gear in the auto-transmission controller.
We use four atomic propositions $g_1,g_2,g_3,g_4$ for each possible gear value, where $(gear=i)\equiv g_i$. 
Then we define three up-shifting events as follows:\\
%\begin{enumerate}\\
1) $e_1=g_1 \wedge \bigcirc g_2$ means shift from gear one to gear two.\\
2) $e_2=g_2 \wedge \bigcirc g_3$ means shift from gear two to gear three.\\
3) $e_2=g_3 \wedge \bigcirc g_4$ means shift from gear three to gear four.\\\\
%\end{enumerate}\\
In CPS, it is possible that we need to specify the safety requirement about three or more events in sequence, but the time difference between the first and last event happening should be of importance.
In general, these types of specification are impossible to represent in MTL. 
We provide two very succinct TPTL specifications that can formalize these challenging requirements.

The first requirement is as follows:\\``{\it Always if $e_1$ happens, then if $e_2$ happens in future and if $e_3$ happens in future after $e_2$, then the duration between $e_1$ and $e_3$ should be equal or more than 8.}"

This specification is formalized in the following formula:
$$\Phi_1=\Box z.( e_1 \rightarrow \Box ( e_2 \rightarrow \Box( e_3 \rightarrow z\ge8 )) )$$
\staliro successfully falsified $\Phi_1$ which is represented in Fig. \ref{fig:TPTL1}.
In Fig. \ref{fig:TPTL1} the Throttle, Break, and Gear trajectory of the corresponding falsification is presented. 
It can be seen that the duration between $e_1$ and $e_3$ is less that 8. Its actual value is $8.4-1.72=6.68 < 8$.
\begin{figure}[!t]
	%\vspace{-5pt} % in text
	%\vspace{-10pt} % on top of page
	\centering 
	\includegraphics[width=\columnwidth]{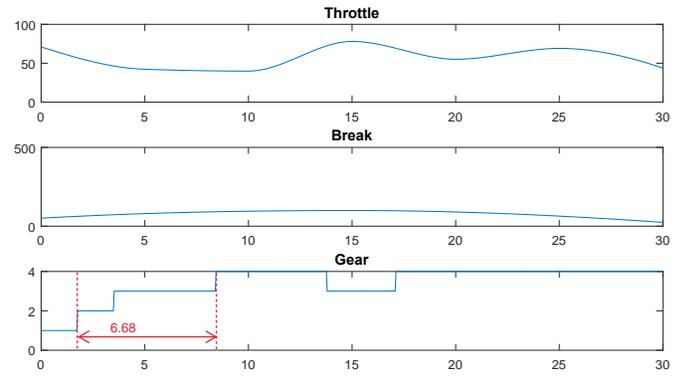}
	\vspace{-15pt}
	\caption{Falsification of $\Phi_1$ using \staliro. The duration between $e_1$ and $e_3$ is less than 8 seconds.}
	\label{fig:TPTL1}
	\vspace{-10pt} % in text
\end{figure}

The second requirement is as follows:\\
``{\it Always if $e_1$ happens, then $e_2$ should happen in future, and $e_3$ should happen in future after $e_2$, and the duration between $e_1$ and $e_3$ should be equal or less than 12.}"

This specification is formalized by the following formula:
$$\Phi_2=\Box z.( e_1 \rightarrow \Diamond ( e_2 \wedge \Diamond( e_3 \wedge z\le12 )) )$$

\begin{figure}[!t]
	%\vspace{-5pt} % in text
	%\vspace{-10pt} % on top of page
	\centering 
	\includegraphics[width=\columnwidth]{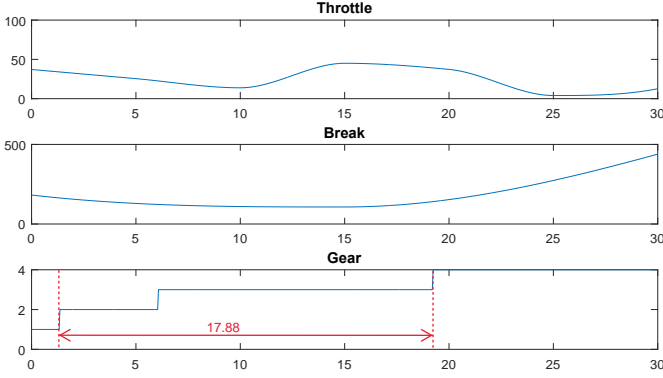}
	\vspace{-15pt}
	\caption{Falsification of $\Phi_2$ using \staliro. The duration between $e_1$ and $e_3$ is more than 12 seconds.}
	\label{fig:TPTL2}
	\vspace{-15pt} % in text
\end{figure}

In Fig. \ref{fig:TPTL2} the Throttle, Break, and Gear trajectories of the falsification of $\Phi_2$ are represented. 
It can be seen that the duration between $e_1$ and $e_3$ is more than 12, its actual value is $19.2-1.32=17.88>12$.
This case study shows that \staliro can be used for the falsification problem of challenging TPTL requirements.
The method we propose in this work opens the possibility for CPS off-line monitoring of very complex specifications in TPTL using an efficient algorithm.
%It has been proven in \cite{BouyerCM10} that the following TPTL specification $\varphi$ in point-wise semantics (See Definition \ref{def:sem4tptl}) is strictly more expressive than MTL:
%$\varphi=\Box(problem\Rightarrow x.\Diamond(alarm\wedge\Diamond(failsafe\wedge x\le5)))$ corresponds to the following natural language requirement ``After the occurrence of a $problem$, the system triggers the $alarm$ and then enters a $failsafe$ mode within 5 time units'' \cite{BouyerCM10}.
  
\section{Conclusions and Future works}
In this paper, we provide an efficient polynomial time algorithm for a practical subset of TPTL specifications.
We show that very complex specifications can be succinctly represented in this TPTL subset.
%In the future, we plan to provide a monitoring algorithm for the full TPTL fragment with fixed number of time variables.
%In this way, if the specification has less than a specific number of time variables we can guarantee the polynomial runtime.
In addition, we can combine full TPTL with a bounded number of time variables with our suggested algorithm to test the specifications that have an arbitrary number of independent time variables and full TPTL with limited number of time variables.
Finally, our method can help CPS developers to efficiently test requirements that cannot be expressed in MTL.

\paragraph*{Acknowledgments}

This research was partially funded by NSF awards CNS-1350420 and CNS-1319560.

\bibliographystyle{abbrv}
\bibliography{adel}

\section{Appendix}
\label{App}
In this section, we will prove the correctness of Algorithms \ref{alg:off-line} and \ref{alg:DP-LTL}. 
Our method first transforms the TPTL formula into LTL formula using Algorithm \ref{alg:off-line}.
Then it uses the dynamic programming method for monitoring LTL using Algorithm \ref{alg:DP-LTL}.
%We consider the freeze quantifier and time constraints in the proof of Algorithm \ref{alg:off-line} (TPTL  Monitor) and propositional and temporal operators in the proof of Algorithm \ref{alg:DP-LTL} (LTL Monitor).
%Before considering the proof of LTL and TPTL formulas, we provide the intuition behind the proof for atomic propositions.
\subsection{Proof of the correctness of Algorithm \ref{alg:off-line}}
\begin{theorem}
	Given an encapsulated TPTL formula $\varphi$, and a finite TSS $\hat{\rho}$, after the execution of Algorithm \ref{alg:off-line} the returned value is:$$M[1,0]=\top\mbox{ iff }(\hat{\rho},0,\mbox{{\bf 0}})\models\varphi$$
\end{theorem}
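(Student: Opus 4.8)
The plan is to prove the equivalence by a nested induction: an outer induction over the subtree ordering $\theta_1, \theta_2, \ldots, \theta_{|V|+1}$ that the main loop processes, and an inner induction establishing the correctness of the LTL dynamic program of Algorithm \ref{alg:DP-LTL}. First I would prove an auxiliary lemma for \textsc{ComputeLTL}: if $\varphi_j$ is built from an LTL operator whose immediate subformulas occupy rows that already hold, for every column $c \ge u$, the value dictated by Definition \ref{def:sem4tptl}, then \textsc{ComputeLTL}$(\varphi_j, u, M)$ returns the correct truth value of $\varphi_j$ at position $u$. The Boolean and \emph{next} cases are immediate from the semantics; the \emph{until} case relies on the one-step unfolding $\varphi_m U \varphi_n \equiv \varphi_n \vee (\varphi_m \wedge \bigcirc(\varphi_m U \varphi_n))$, which is sound because Lines 14--18 fill columns in \emph{decreasing} order of $u$, so $M[j, u+1]$ is already correct when $M[j,u]$ is evaluated. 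Filling rows from $\theta_k.max$ down to $\theta_k.min$ simultaneously guarantees that children are computed before parents, so the structural premise of the lemma is met.

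Next I would set up the outer invariant. Let $\psi_k$ be the subformula stored at $\theta_k.root$, so that the frozen subformula $v_k.\psi_k$ sits at the node $\theta_k.parent$. The claim is that, at the moment Line 19 copies the value for a given $t$,
\[
M[\theta_k.root, t] = \top \iff (\hat{\rho}, t, \varepsilon[v_k := \tau_t]) \models \psi_k ,
\]
whence, by the freeze clause of Definition \ref{def:sem4tptl}, the copied entry $M[\theta_k.parent, t]$ carries the truth value of $v_k.\psi_k$ at position $t$. Encapsulation (Definition \ref{independent}) makes the right-hand side independent of $\varepsilon$ because $v_k$ is the only free variable of $\psi_k$. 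In the base case $k=1$ the leaves of $\theta_1$ are atomic propositions (set in Line 1) and constraints $v_1 \sim r$; the latter are resolved in Lines 4--13 by testing $(\tau_u - \tau_t) \sim r$, which is precisely the semantic clause for $v_1 \sim r$ once $v_1$ has been frozen to $\tau_t$. Hence $\psi_1$ has become a genuine LTL formula over correctly filled leaf rows, and the auxiliary lemma yields the invariant. In the inductive step the leaves of $\theta_k$ are atomic propositions, constraints $v_k \sim r$, and the \emph{parent} nodes of inner subtrees $\theta_{k'}$ with $k' < k$; by the induction hypothesis those parent rows already hold the correct frozen-subformula values, so $\psi_k$ again reduces to LTL and the lemma applies.

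Finally, once all $|V|$ variables have been processed, every frozen subformula of $\varphi$ has been resolved to a Boolean row, so the top subtree $\theta_{|V|+1}$ containing $\varphi_1 \equiv \varphi$ is an LTL formula. Lines 22--26 run the same dynamic program over the full column range $0$ to $|\hat{\rho}|-1$, and the auxiliary lemma gives $M[1,0] = \top \iff (\hat{\rho}, 0, \mathbf{0}) \models \varphi$; the environment is irrelevant because $\varphi$ is closed. I expect the principal difficulty to be the careful justification of the invariant's pivotal step --- that fixing $v_k$ to the single stamp $\tau_t$ \emph{before} the column sweep, and holding it constant across all positions $u \ge t$, faithfully realises the freeze semantics $(\hat{\rho}, t, \varepsilon) \models v_k.\psi_k \iff (\hat{\rho}, t, \varepsilon[v_k := \tau_t]) \models \psi_k$. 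A related subtlety is that for each freeze point the whole subtree is re-evaluated on columns $u \ge t$, yet only the diagonal entry $M[\theta_k.root, t]$ is extracted for the parent; the encapsulation hypothesis is exactly what licenses this separation, ruling out any inner quantifier that could rebind $v_k$ and decoupling the current variable from the already-resolved inner frozen subformulas.
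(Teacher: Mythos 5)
Your proposal is correct and follows essentially the same route as the paper's own proof: an outer induction over the subtrees $\theta_1,\dots,\theta_{|V|+1}$ with the invariant that the row of each frozen subformula $v_k.\psi_k$ holds its correct truth value (established via the time-constraint evaluation $(\tau_u-\tau_t)\sim r$ and the Line~19 copy to $\theta_k.parent$), combined with an inner induction proving the backward dynamic program of \textsc{ComputeLTL}, including the until unfolding $\varphi_m U \varphi_n \equiv \varphi_n \vee (\varphi_m \wedge \bigcirc(\varphi_m U \varphi_n))$. Your explicit isolation of the auxiliary LTL lemma and of the role of encapsulation in extracting only the diagonal entry $M[\theta_k.root,t]$ is a slightly cleaner packaging of the same argument, not a different one.
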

To prove this theorem, we must show that the Boolean value of the subformulas that are computed using Algorithm \ref{alg:off-line}, follows the TPTL semantics in Definition \ref{def:sem4tptl}. 
Since Algorithm \ref{alg:off-line} does not evaluate propositional and temporal operators, their corresponding proof will be provided in Section \ref{proofLTL}.

According to the TPTL semantics in Definition \ref{def:sem4tptl}, for each freeze time operation $x.\varphi(x)$, and for each time stamp $\tau_i$ we must instantiate the time variable $x$ with the value of $\tau_i$.
This instantiation enables us to evaluate time constraints and transform TPTL to LTL.
The loop of Lines 2-21 is the main loop of Algorithm \ref{alg:off-line} which instantiates each variable $v_k$ with each time sample $\tau_t$ in Line 3. 
\begin{lemma}
	\label{LI}
The loop invariant of Algorithm \ref{alg:off-line} is as follows:
$$\forall j,k,t\mbox{ where }\varphi_j\equiv v_k.\varphi_i, 0\le t<|\hat{\rho}|:$$
$$M[j,t]=\top\mbox{ iff }(\hat{\rho},t,\varepsilon)\models v_k.\varphi_i$$
%\mbox{ iff }(\hat{\rho},t,\varepsilon[v_k:=\tau_t])\models\varphi_i$$
\end{lemma}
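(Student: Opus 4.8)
The plan is to prove Lemma \ref{LI} as a genuine loop invariant by induction on the outer loop of Algorithm \ref{alg:off-line} (the index $k$ in Lines 2--21), which processes the subtrees $\theta_1,\ldots,\theta_{|V|}$ in order of increasing nesting depth. The induction hypothesis is that once subtrees $\theta_1,\ldots,\theta_{k-1}$ have been processed, every frozen subformula rooted in one of them carries its correct Boolean value in its root row and, via the copy in Line 19, in its parent row, for every column $0\le t<|\hat\rho|$. Because subtrees are handled innermost-first, this guarantees that by the time we reach $\theta_k$ all freeze quantifiers strictly nested inside $v_k.\varphi_i$ have already been collapsed to $\top/\bot$ constants in the table. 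This innermost-first ordering, established in Section \ref{datastructure} via the $\theta_k.root$/$\theta_k.parent$ fields, is exactly what makes the induction go through.

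First I would fix $\theta_k$ and a column $t$ and show that the instantiation loop (Lines 4--13) resolves the time-constraint rows correctly. When the freeze quantifier sets $v_k:=\tau_t$, the clause for $x\sim r$ in Definition \ref{def:sem4tptl} requires, at position $u$, the truth of $(\tau_u-\varepsilon(v_k))\sim r = (\tau_u-\tau_t)\sim r$, which is precisely the test in Line 6. Together with the induction hypothesis this means that at the start of the dynamic-programming phase every leaf of $\theta_k$ — each $v_k$-constraint and each already-resolved nested frozen subformula — holds the correct $\top/\bot$ value, so under the assignment $v_k:=\tau_t$ the subtree $\theta_k$ is a pure LTL formula. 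I would then invoke the correctness of \textsc{ComputeLTL} (Algorithm \ref{alg:DP-LTL}, deferred to Section \ref{proofLTL}) to conclude that the sweep of Lines 14--18 yields $M[\theta_k.root,t]=\top$ iff $(\hat\rho,t,\varepsilon[v_k:=\tau_t])\models\varphi_i$. By the freeze-quantifier clause $(\hat\rho,t,\varepsilon)\models v_k.\varphi_i \iff (\hat\rho,t,\varepsilon[v_k:=\tau_t])\models\varphi_i$, this is exactly the asserted equivalence, and Line 19 propagates it to the parent row so the next subtree may treat it as a resolved constant.

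The step I expect to be the main obstacle is arguing that re-instantiating $v_k$ afresh for each $t$ does not corrupt the result, since each iteration of the $t$-loop overwrites the constraint rows and the subtree columns for indices $u\ge t$. The point to make carefully is that the invariant asserts correctness only on the \emph{diagonal} entry $M[\theta_k.root,t]$: within iteration $t$ all columns $u\ge t$ are resolved consistently for the single instantiation $v_k:=\tau_t$, and because the finite-trace semantics of $\bigcirc$ and $U$ in Definition \ref{def:sem4tptl} look only forward (the \textsc{Next} and \textsc{Until} clauses of Algorithm \ref{alg:DP-LTL} read only columns $u$ and $u+1$), the value computed at column $t$ never depends on any column below $t$. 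Hence the correct diagonal value is produced and copied to the parent by Line 19 before iteration $t+1$ begins, and since that later iteration touches only columns $\ge t+1$, the captured value is never clobbered. A short sub-induction on $t$ within the fixed $k$ then closes the argument, and taking $k=|V|$ together with the final LTL sweep of Lines 22--26 will deliver the theorem for $M[1,0]$.
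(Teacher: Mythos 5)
Your proof is correct and takes essentially the same approach as the paper's: induction on the outer-loop index $k$ exploiting the innermost-first ordering of the subtrees, matching the instantiation loop (Lines 4--13) against the time-constraint clause $(\tau_u-\varepsilon(v_k))\sim r$ of Definition \ref{def:sem4tptl}, invoking the correctness of Algorithm \ref{alg:DP-LTL} (Section \ref{proofLTL}) for the sweep of Lines 14--18, and closing with the freeze-quantifier clause together with the Line 19 copy to the parent row. Your explicit handling of the column-overwriting subtlety (correctness claimed only on the diagonal entry $M[\theta_k.root,t]$, with later $t$-iterations writing only columns $\ge t+1$) is a point the paper leaves implicit, but it does not alter the structure of the argument.
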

We use induction to prove the loop invariant of Algorithm \ref{alg:off-line}.

{\bf Base:} If $|V|=0$, then formula is in LTL and algorithm does not enter the to loop of Lines 2-21 (only executes Lines 22-26). The proof of LTL is provided in Section \ref{proofLTL}.

{\bf Induction Hypothesis:} We assume for all $v_l$, where $l<k$ the invariant holds. In other words
$$\forall j,l<k,t\mbox{ where }\varphi_j\equiv v_l.\varphi_i, 0\le t<|\hat{\rho}|:$$
$$M[j,t]=M[\theta_l.parent,t]=\top\mbox{ iff }(\hat{\rho},t,\varepsilon)\models v_l.\varphi_i$$
%Thus, we need to establish proof for $v_k$.

{\bf Induction Step:} To show the correctness for the case of $v_k$, we prove that Algorithm \ref{alg:off-line} correctly transform TPTL into LTL. Then we apply the correctness of LTL (See Section \ref{proofLTL}) to establish the correctness of invariant considering $v_k$.
Thus, we consider two cases that instantiate and evaluate $v_k$ and show that Algorithm \ref{alg:off-line} follows the semantics in Definition \ref{def:sem4tptl}. According to I.H. and since time variables are independent, we can correctly consider frozen subformulas of $\varphi_i$ as $\top/\bot$. As a result, we will conclude that $\varphi_i$ is in LTL.\\
%The corresponding case in Definition \ref{def:sem4tptl} for loop invariant is the freeze time operator.\\
\\{\bf Case of $v_k.\varphi_i$:}\\
Consider the semantics of the freeze operator in Definition \ref{def:sem4tptl}:
$$(\hat{\rho},t,\varepsilon)\models v_k.\varphi_i\mbox{ iff }(\hat{\rho},t,\varepsilon[v_k:=\tau_t])\models\varphi_i$$
According to this semantics, the freeze operation ``$v_k.$'' first assigns a new value to the variable $(v_k:=\tau_t)$. 
Then the $\top/\bot$ value of $v_k.\varphi_i\equiv\varphi_j$ will be resolved to the same $\top/\bot$ value of $\varphi_i$ (with the new environment update). 
Therefore, for each variable assignment $(v_k:=\tau_t)$, we first update the environment variables (Algorithm \ref{alg:off-line}, Line 3), and then copy the $\varphi_i$'s $\top/\bot$ value into $v_k.\varphi_i$'s corresponding row (Algorithm \ref{alg:off-line}, Line 19).

Since each time variable $v_k$ is independent, we create the subtree (set) $\theta_k$ corresponding to the subformulas of  $v_k.\varphi_i(v_k)$ (see Section \ref{datastructure}). % i.e. $\theta_k\subseteq v_k.\varphi(v_k)$.
To evaluate $v_k.\varphi_i(v_k)$, we must first instantiate variable $v_k$ for each time stamp $\tau_0\dots\tau_{|\hat{\rho}|-1}$.
This instantiation is considered in Line 2 of Algorithm \ref{alg:off-line} for time variable $v_k$ and for each sample of time $0\dots (|\hat{\rho}|-1)$ in Line 3 of Algorithm \ref{alg:off-line}.
Now we must copy the resulting $\top/\bot$ value from $\varphi_i$ back to $v_k.\varphi_i$.
The row corresponding to $\theta_k.root$ contains the $\top/\bot$ value of $\varphi_i$ which is the root of $\theta_k$ subtree.
This values must be copied to the row $\theta_k.parent$ which is the parent of subtree $\theta_k$ and it corresponds to $\varphi_j$ (Algorithm \ref{alg:off-line}, Line 19).
\\
%As a result, $|V|\times|\hat{\rho}|$ number of freeze time instantiations for each time variables and for each sample of time are used to evaluate the frozen subformula.\\\\
\\{\bf Case of $v_k\sim r$:}\\
Consider the semantics of time constraints in Definition \ref{def:sem4tptl}:
$$(\hat{\rho},u,\varepsilon)\models v_k\sim r \mbox{ iff } (\tau_u-\varepsilon(v_k))\sim r$$
In the above semantics, $\varepsilon(v_k)$ corresponds to the frozen value of the time variable $v_k$ (environment of $v_k$).
In the previous case for $v_k.\varphi_i$, we mentioned that we should instantiate $v_k$ at each time stamp $\tau_0\dots\tau_{|\hat{\rho}|-1}$.
According to semantics in Definition \ref{def:sem4tptl}, each freeze operator assigns the environment variable for the current and future samples of time $t$:\\
$$(\hat{\rho},t,\varepsilon)\models v_k.\varphi_i\mbox{ iff }(\hat{\rho},t,\varepsilon[v_k:=\tau_t])\models\varphi_i$$
Which means that the environment updates $\varepsilon[x:=\tau_t]$ are observable for the current and the future samples ($t\le u$).
Therefore, after we instantiated variable $v_k$ at each time stamp $\tau_t$, the environment update will affect all the samples $u$ between $t\le u\le |\hat{\rho}|-1$.
As a result, the time constraint $v_k\sim r$ must be updated for all future samples of $t\le u\le |\hat{\rho}|-1$ for $\varepsilon[v_k:=\tau_t]$ instantiation. 

Lines 4-13 of Algorithm \ref{alg:off-line} follow the above discussion.
Namely, for time variable $v_k$, we instantiate each time stamp $\tau_t$ (Line 3), the time constraints of current/future samples are evaluated according to the frozen time stamp $\tau_t$.
Actual evaluation happens in the Line 7 of Algorithm \ref{alg:off-line}, where $(\tau_u-\tau_t)\sim r$ follows the semantic $(\tau_u-\varepsilon(v_k))\sim r$ for each environment assignment of $\varepsilon[v_k:=\tau_t]$.
Lines 14-18 of Algorithm \ref{alg:off-line} will evaluate the LTL formula $\varphi_i(\tau_t)$.

So far, we transformed TPTL $v_k.\varphi_i(v_k)$ into LTL $\varphi_i(\tau_t)$ for each time stamp $\tau_t$.
Now we can prove that the loop invariant of Algorithm \ref{alg:off-line} holds for $v_k$.
\begin{proof}
	We will prove the Induction Step by assuming the correctness of LTL formula $\varphi_i$ according to Section \ref{proofLTL}:
	$$\forall i,t,\varepsilon\mbox{ where }\varphi_i\subset LTL, 0\le t<|\hat{\rho}|$$
	$$M[i,t]=\top\mbox{ iff }(\hat{\rho},t,\varepsilon)\models\varphi_i$$
	Since for each $\theta_k$, $i=\theta_k.root$ is the index of the highest LTL, $M[\theta_k.root,t]$ will also contain the correct $\top/\bot$ value, therefore\\
	$M[i,t]=M[\theta_k.root,t]=\top\mbox{ iff }(\hat{\rho},t,\varepsilon)\models\varphi_i(v_k=\tau_t)$ iff $(\hat{\rho},t,\varepsilon[v_k:=\tau_t])\models\varphi_i$
	
	Since in Line 19 $M[\theta_k.parent,t]\leftarrow M[\theta_k.root,t]$ \\
	and $j=\theta_k.parent$ we have \\$M[j,t]\leftarrow M[i,t]$, as a result
	%for each $\varphi_j\equiv v_k.\varphi_i$
	$$M[j,t]=M[\theta_k.parent,t]=\top\mbox{ iff }(\hat{\rho},t,\varepsilon)\models v_k.\varphi_i\equiv\varphi_j$$
\end{proof}
%Finally, the highest subformula $\theta_{|V|+1}$ will also be an LTL formula and will be evaluated between Lines 22-26 of Algorithm \ref{alg:off-line} which will be proven in the next section.

\subsection{Proof of the  correctness of Algorithm \ref{alg:DP-LTL}}
\label{proofLTL}
LTL formulas consider only propositional and temporal operators; therefore, the time variables' environment ($\varepsilon$) is not affected by Algorithm \ref{alg:DP-LTL}. 
Since time variables do not change during Algorithm \ref{alg:DP-LTL}, we assume that Algorithm \ref{alg:DP-LTL} considers time constraints as $\top/\bot$ values since they are already evaluated in Algorithm \ref{alg:off-line}.
In this section, we prove that the output of Algorithm \ref{alg:DP-LTL} corresponds to the correct evaluation of the LTL subformula  $\varphi_j$ at sample instance $u$ based on Definition \ref{def:sem4tptl}. 

In essence, we will prove $M[j,u]=\top$ if $(\hat{\rho},u,\varepsilon)\models\varphi_j$ and similarly $M[j,u]=\bot$ if $(\hat{\rho},u,\varepsilon)\not\models\varphi_j$.
For the proof of Algorithm \ref{alg:DP-LTL}, we use induction:

{\bf Base:} In Section \ref{MA}, we mentioned that in Line 1 of Algorithm \ref{alg:off-line} the corresponding values for atomic propositions are stored in the monitoring table. 
In essence, for each $a\in AP$, and for each time stamp $\tau_u$, we save the following values in the monitoring table entry $M[a_{index},u]$, where $a_{index}$ is the index of atomic proposition $a$ in the monitoring table $M_{|\varphi|\times |\hat{\rho}|}$:
\begin{enumerate}
	\item $M[a_{index},u]\leftarrow\top$ if $a\in \dsig_u$ if $(\hat{\rho},u,\varepsilon)\models a$
	\item $M[a_{index},u]\leftarrow\bot$ if $a\not\in \dsig_u$ if $(\hat{\rho},u,\varepsilon)\not\models a$
\end{enumerate}
Since evaluation of predicates is independent of the time variables' environment ($\varepsilon$) the above cases are always satisfied for all sample instances $u$ and all environments $\varepsilon$.
As a result, every table entry corresponding to a predicate, correctly reflects the satisfaction of the predicate with respect to the state trace $\hat{\dsig}$ and the environment $\varepsilon$.
Similarly, the table entries for constant Boolean values ($\top/\bot$) are trivially correct.

{\bf Induction Hypothesis:} Algorithm \ref{alg:off-line} updates the values of Table from right to left, i.e., for the samples with indexes $|\hat{\rho}|-1$ down to 0.
This is because we resolve temporal operators looking into the future. 
Namely, if the Boolean value in the next samples of time are resolved, then we can resolve the Boolean evaluation for the current sample of time.
For the Induction Hypothesis, we assume the table entries for the proper subformulas of $\varphi_j$ at the same or future samples contain the correct $\top/\bot$, i.e, we assume that\\
$$\forall \varphi_k\subset \varphi_j, \forall v\ge u, M[k,v]=\top\mbox{ iff }(\hat{\rho},v,\varepsilon)\models \varphi_k$$
And also for the same subformula ($\varphi_j$), we assume the table entries for all the future samples contain the correct $\top/\bot$ values as follows:\\
$$\forall v>u, M[j,v]=\top\mbox{ iff }(\hat{\rho},v,\varepsilon)\models \varphi_j$$

{\bf Induction Step:} For the induction step we consider five cases of $\varphi_j$:\\
\\{\bf Case 1:} $\varphi_j\equiv\neg\varphi_m$: \\
Consider $M[j,u]\leftarrow\neg M[m,u]$ (Algorithm \ref{alg:DP-LTL}, Line 2).\\ 
According to Definition \ref{def:sem4tptl}:
$(\hat{\rho},u,\varepsilon)\models\neg\varphi_m$  iff  $(\hat{\rho},u,\varepsilon)\not\models\varphi_m$\\ 
Based on IH: $M[m,u]=\bot$ iff $(\hat{\rho},u,\varepsilon)\not\models\varphi_m$  iff (based on Def. \ref{def:sem4tptl}) $(\hat{\rho},u,\varepsilon)\models\neg\varphi_m\equiv\varphi_j$ \\ 
Therefore, $M[j,u]=\neg M[m,u]=\neg\bot$ iff $(\hat{\rho},u,\varepsilon)\not\models\varphi_m$ iff  $(\hat{\rho},u,\varepsilon)\models\neg\varphi_m\equiv\varphi_j$\\
As a result $M[j,u]=\top$ iff $(\hat{\rho},u,\varepsilon)\models\varphi_j$\\
%then $\neg M[m,u]=\neg\bot$ iff $\rho,u,\varepsilon\models\varphi_m$ iff $\rho,u,\varepsilon\not\models\neg\varphi_m\equiv\varphi_j$ \\
%as a result $M[j,u]\leftarrow\neg M[m,u]=\bot$ iff 
\\{\bf Case 2:}
$\varphi_j\equiv\varphi_m\wedge\varphi_n$:\\
Consider $M[j,u]\leftarrow M[m,u]\wedge M[n,u]$ (Algorithm \ref{alg:DP-LTL}, Line 4).\\
According to Definition \ref{def:sem4tptl}: $(\hat{\rho},u,\varepsilon)\models\varphi_m\wedge\varphi_n$ iff  $(\hat{\rho},u,\varepsilon)\models\varphi_m$ and $(\hat{\rho},u,\varepsilon)\models\varphi_n$\\
Based on IH: $M[m,u]=\top$ iff $(\hat{\rho},u,\varepsilon)\models\varphi_m$ and $M[n,u]=\top$ iff $(\hat{\rho},u,\varepsilon)\models\varphi_n$\\
We know that, $M[m,u]\wedge M[n,u]=\top$ iff $M[m,u]=\top$ and $M[n,u]=\top$\\
Thus, $M[m,u]\wedge M[n,u]=\top$ iff $(\hat{\rho},u,\varepsilon)\models\varphi_m$ and $(\hat{\rho},u,\varepsilon)\models\varphi_n$\\
Therefore, $M[m,u]\wedge M[n,u]=\top$ iff $(\hat{\rho},u,\varepsilon)\models\varphi_m\wedge\varphi_n\equiv\varphi_j$\\
As a result $M[j,u]=\top$ iff $(\hat{\rho},u,\varepsilon)\models\varphi_j$\\
\\{\bf Case 3:}
$\varphi_j\equiv\varphi_m\vee\varphi_n$:\\
Consider $M[j,u]\leftarrow M[m,u]\vee M[n,u]$ (Algorithm \ref{alg:DP-LTL}, Line 6).\\
According to Definition \ref{def:sem4tptl}: $(\hat{\rho},u,\varepsilon)\models\varphi_m\vee\varphi_n$  iff  $(\hat{\rho},u,\varepsilon)\models\varphi_m$ or $(\hat{\rho},u,\varepsilon)\models\varphi_n$\\
Based on IH: $M[m,u]=\top$ iff $(\hat{\rho},u,\varepsilon)\models\varphi_m$ and $M[n,u]=\top$ iff $(\hat{\rho},u,\varepsilon)\models\varphi_n$\\
We know that, $M[m,u]\vee M[n,u]=\top$ iff $M[m,u]=\top$ or $M[n,u]=\top$\\
Thus, $M[m,u]\vee M[n,u]=\top$ iff $(\hat{\rho},u,\varepsilon)\models\varphi_m$ or $(\hat{\rho},u,\varepsilon)\models\varphi_n$\\
Therefore, $M[m,u]\vee M[n,u]=\top$ iff $(\hat{\rho},u,\varepsilon)\models\varphi_m\vee\varphi_n\equiv\varphi_j$\\
As a result $M[j,u]=\top$ iff $(\hat{\rho},u,\varepsilon)\models\varphi_j$\\
\\{\bf Case 4:}
$\varphi_j\equiv\bigcirc\varphi_m$\\
Consider $M[j,u]\leftarrow M[m,u+1]$ if $u<|\hat{\rho}|-1$ (Line 11) and $M[j,u]\leftarrow\bot$ otherwise (Line 9 of Algorithm \ref{alg:DP-LTL}).\\
According to Definition \ref{def:sem4tptl} we have two cases:\\
{\bf Case 4.1)} $u<(|\hat{\rho}|-1)$:\\ $(\hat{\rho},u,\varepsilon)\models\bigcirc\varphi_m$ iff $(\hat{\rho},u+1,\varepsilon)\models\varphi_m$\\
Based on IH: $M[m,u+1]=\top$ iff $(\hat{\rho},u+1,\varepsilon)\models\varphi_m$ iff $(\hat{\rho},u,\varepsilon)\models\bigcirc\varphi_m\equiv\varphi_j$\\ 
As a result $M[j,u]=M[m,u+1]=\top$ iff $(\hat{\rho},u,\varepsilon)\models\varphi_j$\\
{\bf Case 4.2)} $u=|\hat{\rho}|-1$:\\ 
%$(\hat{\rho},u,\varepsilon)\models\bigcirc\varphi_m$ iff $(\hat{\rho},u,\varepsilon)\models\bot$\\
by Definition \ref{def:sem4tptl}, $(\hat{\rho},u,\varepsilon)\not\models\bot$\\
Line 9 of Algorithm \ref{alg:DP-LTL} similarly assigns $M[j,u]\leftarrow\bot$\\
\\{\bf Case 5:}
$\varphi_j\equiv\varphi_mU\varphi_n$\\
According to \cite{Gerth1995SOA}, Until operation can be simplified according to following equivalence relation: 
$$\phi U\psi\equiv\psi\vee(\phi\wedge\bigcirc(\phi U\psi))$$
In other words, we need to consider current value of $\bigcirc(\phi U\psi)$ (future value of $\phi U\psi$ at the next sample) and use the current values of $\phi$ and $\psi$ to resolve and evaluate $\phi U\psi$ at the current sample using  equation $\psi\vee(\phi\wedge\bigcirc(\phi U\psi))$. Algorithm \ref{alg:DP-LTL} considers two case for $\varphi_j\equiv\varphi_mU\varphi_n\equiv\varphi_n\vee(\varphi_m\wedge\bigcirc(\varphi_m U\varphi_n))$:\\
{\bf Case 5.1)} $u<(|\hat{\rho}|-1)$:\\
Now consider the update of $M[j,u]\leftarrow M[n,u]\vee(M[m,u]\wedge M[j,u+1])$ according to Line 17 of Algorithm \ref{alg:DP-LTL}.\\
Based on IH: $M[n,u]=\top$ iff $(\hat{\rho},u,\varepsilon)\models\varphi_n$ and $M[m,u]=\top$ iff $(\hat{\rho},u,\varepsilon)\models\varphi_m$ and\\ $M[j,u+1]=\top$ iff $(\hat{\rho},u+1,\varepsilon)\models\varphi_j$ iff $(\hat{\rho},u,\varepsilon)\models\bigcirc\varphi_j$\\
According to Case 2 (Conjunction)  $M[m,u]\wedge M[j,u+1]=\top$ iff $(\hat{\rho},u,\varepsilon)\models\varphi_m$ and $(\hat{\rho},u,\varepsilon)\models\bigcirc\varphi_j$\\
Therefore, $M[m,u]\wedge M[j,u+1]=\top$ iff $(\hat{\rho},u,\varepsilon)\models\varphi_m\wedge\bigcirc\varphi_j$ \\
We know that, $M[j,u]=\top$ iff $M[n,u]=\top$ or $M[m,u]\wedge M[j,u+1]=\top$ \\
According to Case 3 (Disjunction) $M[j,u]=\top$ iff $(\hat{\rho},u,\varepsilon)\models\varphi_n$ or $(\hat{\rho},u,\varepsilon)\models\varphi_m\wedge\bigcirc\varphi_j$ \\
As a result, $M[j,u]=\top$ iff $(\hat{\rho},u,\varepsilon)\models\varphi_n\vee(\varphi_m\wedge\bigcirc\varphi_j)$\\
{\bf Case 5.2)} $u=|\hat{\rho}|-1$:\\
According to Case 4.2 for Next operator: $(\hat{\rho},u,\varepsilon)\not\models\bot$\\
%$(\hat{\rho},u,\varepsilon)\models\bigcirc\varphi_j$ iff $(\hat{\rho},u,\varepsilon)\models\bot$\\
This implies that $\varphi_j\equiv\varphi_n\vee(\varphi_m\wedge\bot)\equiv\varphi_n\vee\bot\equiv\varphi_n$\\
Now consider the update of $M[j,u]\leftarrow M[n,u]$ according to Line 15 of Algorithm \ref{alg:DP-LTL}.\\
Based on IH: $M[n,u]=\top$ iff $(\hat{\rho},u,\varepsilon)\models\varphi_n$\\
Therefore after the assignment, $M[j,u]=\top$ iff $(\hat{\rho},u,\varepsilon)\models\varphi_j$

\end{document}